\newcommand{\tr}{\text{\normalfont Tr}}
\newcommand{\ie}{\textit{i.e.}}
\newtheorem{theorem}{Theorem}
\newtheorem{coro}[theorem]{Corollary}
\newtheorem{prop}[theorem]{Proposition}
\theoremstyle{definition}
\newtheorem{example}{Example}
\newcommand{\cH}{\mathcal{H}}
\newcommand{\Herm}{\text{Herm}}
\newcommand{\bD}{\textbf{D}}
\newcommand{\bN}{\textbf{N}}
\newcommand{\bM}{\textbf{M}}
\newcommand{\cD}{\mathcal{D}}
\newcommand{\II}{\mathbb{I}}
\newcommand{\cP}{\mathcal{P}}
\newcommand{\bA}{\textbf{A}}
\newcommand{\cA}{\mathcal{A}}
\newcommand{\cB}{\mathcal{B}}
\newcommand{\cC}{\mathcal{C}}
\newcommand{\bB}{\textbf{B}}
\newcommand{\bS}{\textbf{S}}
\newcommand{\bC}{\textbf{C}}
\newcommand{\cJ}{\mathcal{J}}
\newcommand{\beq}{\begin{equation}}
\newcommand{\eeq}{\end{equation}}
\begin{document}

\title{Uniqueness of the joint measurement and the structure of the set of compatible quantum measurements} 

\author{Leonardo Guerini}
\email[]{leonardo.guerini@icfo.eu}
\affiliation{ICFO-Institut de Ciencies Fotoniques, The Barcelona Institute of Science and Technology, 08860 Castelldefels (Barcelona), Spain}
 \affiliation{Departamento de Matem\'atica, Universidade Federal de Minas Gerais, 31270-901, Belo Horizonte, MG, Brazil}

\author{Marcelo Terra Cunha}
\affiliation{Departamento de Matem\'atica Aplicada, IMECC-Unicamp, 13084-970, Campinas, SP, Brazil}

\date{\today}

\begin{abstract}
We address the problem of characterising the compatible tuples of measurements that admit a unique joint measurement.
We derive a uniqueness criterion based on the method of perturbations and apply it to show that extremal points of the set of compatible tuples admit a unique joint measurement, while all tuples that admit a unique joint measurement lie in the boundary of such a set.
We also provide counter-examples showing that none of these properties are both necessary and sufficient, thus completely describing the relation between joint measurement uniqueness and the structure of the compatible set.
As a by-product of our investigations, we completely characterise the extremal and boundary points of the set of general tuples of measurements and of the subset of compatible tuples.

\end{abstract}

%\pacs{03.65.Wj, 03.67.-a}

\maketitle

\section{Introduction}\label{intro}

Among the many counter-intuitive features of quantum theory, the fact that we cannot perfectly implement certain measurements concomitantly lies among the most remarkable ones.
This fundamental incompatibility, captured by the notion of joint measurability~\cite{heinosaari16}, lies in the core of a myriad of phenomena and applications of the theory, such as Bell nonlocality~\cite{brunner14}, uncertainty relations~\cite{heisenberg27} and quantum key distribution~\cite{bennett92}.

Similarly to the role played by entanglement in the study of the set of quantum states, measurement compatibility motivates us to investigate the set of quantum measurements.
Although this set is well-understood in terms of extremal and boundary points~\cite{dariano05}, the richer set composed of tuples (ordered sets) of measurements remains to be investigated.
In particular, little is known about the subset of jointly measurable tuples of measurements.

Joint measurability refers to the property of a tuple of measurements to be implemented as a single one, the so-called joint measurement.
Therefore, the joint measurability of such a tuple is equivalent to the \emph{existence} of a joint measurement.
The starting point of this work is to understand the duality existence-uniqueness in this case, posing the follow up question: what compatible tuples of measurements admit a \emph{unique} joint measurement?

In the particular case of compatible tuples of projective measurements we do have a unique joint measurement.
Joint measurement uniqueness can be further connected to the concepts of greatest and maximal lower bounds~\cite{teiko08}.
The extremality of the measurements in the tuple was also studied, and found to be related to the extremality and uniqueness of the corresponding joint measurement~\cite{haapasalo14}, but not equivalent.
This property is also sufficient for some relations between joint measurability and coexistence~\cite{haapasalo15}, but the extremality of the tuple itself was never considered.
Since the set of jointly measurable tuples is a convex proper subset of the set of general tuples of measurements, a natural idea is to investigate the relation between joint measurement uniqueness and the extremality/boundary property of tuples in this set (Figure \ref{scyther}).

In this work we characterise the extremal and boundary points of the sets of general and compatible tuples of measurements.
We prove that a compatible tuple is extremal (respectively, in the boundary) in the compatible set if and only if its joint measurement is extremal (respectively, in the boundary) in the corresponding set of single measurements.

Furthermore, we generalise the perturbation technique introduced in Ref.~\cite{dariano05} and derive a criterion for deciding whether a given compatible tuple admits a unique joint measurement.
We apply it to prove necessary and sufficient conditions for joint measurement uniqueness, namely that (i) extremal tuples (in the compatible set) admit a unique joint measurement, and (ii) tuples admitting a unique joint measurement lie in the boundary of the compatible set.
We also present examples showing that none of this conditions is both necessary and sufficient, therefore completely determining the relation between uniqueness and the structure of the compatible set.

\begin{figure}[h!]
\begin{center}
\includegraphics[width=0.9\linewidth]{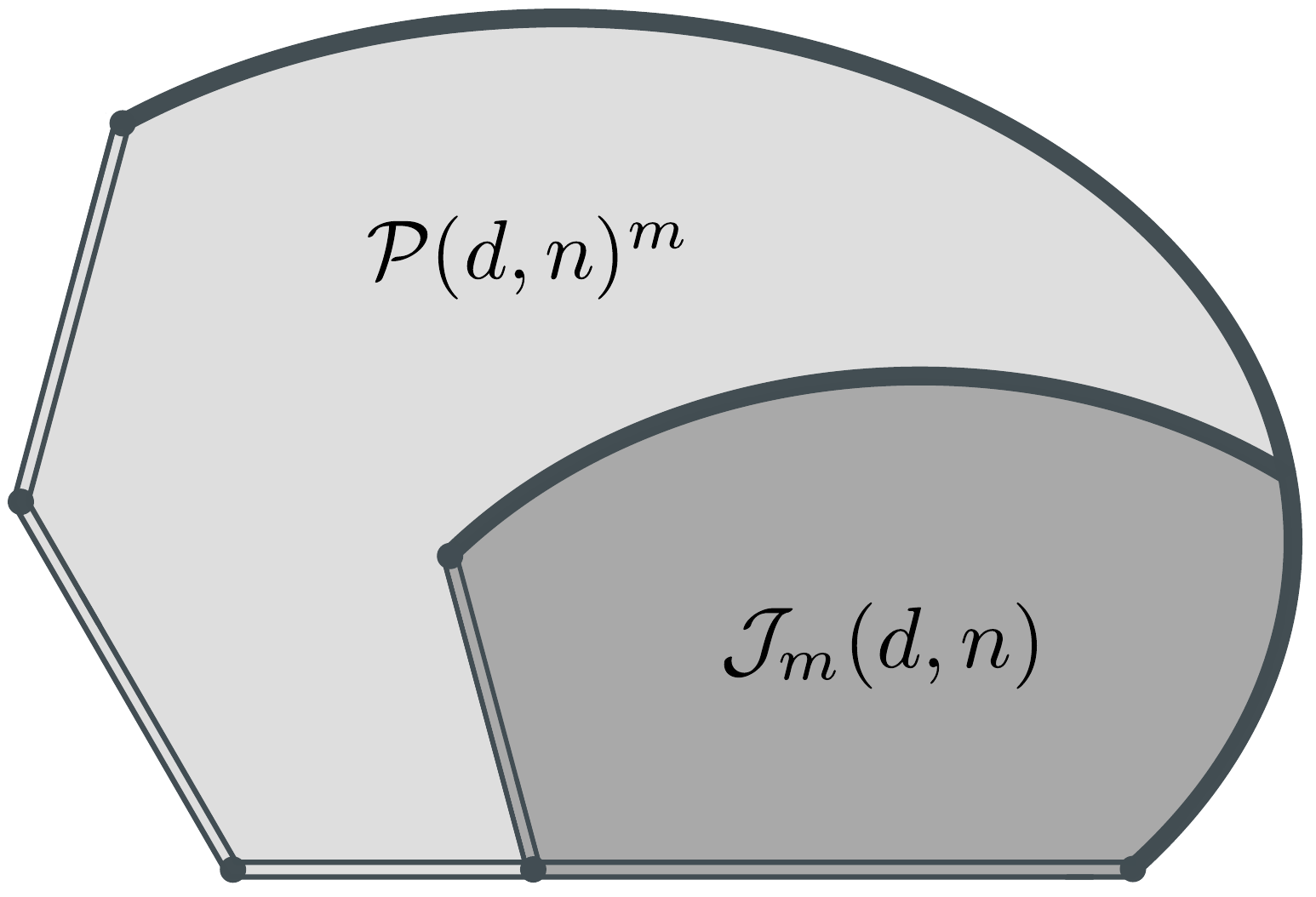}
	\caption{\small{\label{scyther}The two sets of main interest in this work: the set $\cP(d,n)^{ m}$ of tuples of $m$ measurements in dimension $d$ with $n$ outcomes (in light colour) and its subset $\cJ_m(d,n)$ of jointly measurable tuples (in dark colour).
The boundaries of both sets are composed of extremal (in full line) and non-extremal points (in double line).}}
\end{center}
\end{figure}

\section{Preliminaries}

Let $\cH$ be a $d$-dimensional Hilbert space and $\Herm(\cH)$ be the set of Hermitian operators acting on $\cH$.
A \textit{quantum measurement} on $\cH$ with $n$ outcomes is modelled by a POVM (positive-operator-valued measure), which is a tuple $\bA = (A_1,  \ldots , A_n)\in \Herm(\cH)^n$ of positive semi-definite operators satisfying $\sum_i{A_i} = \II$, where each $A_i$ corresponds to outcome $i$ and $\II$ is the identity operator on $\cH$.
The operators $A_i$ are called the \textit{effects} of $\bA$.
In the case where the effects $A_i$ are projectors, we say that $\bA$ is a \textit{projective measurement}.
Notice that some effects might be null, corresponding to outcomes that never occur.

We denote the set of $n$-outcome qudit measurements by $\cP(d,n)$, or simply by $\cP$ if the dimension and number of outcomes are redundant.
$\cP$ is a convex set.
Indeed, given $\bA, \bB\in\cP$ and $p\in[0,1]$, their convex combination $p\bA+(1-p)\bB=:\bC\in\cP$ is the measurement defined by the effects $C_i=pA_i+(1-p)B_i$, for $i=1,\ldots,n$.
A measurement is \textit{extremal} (in $\cP$) if it cannot be decomposed into the convex combination of two different POVMs.

A tuple~\footnote{Here we consider tuples of measurements (instead of sets) in order to unambiguously associate the POVM on the $j$-th entry of the tuple with the marginal obtained by summing the effects of the joint measurement over all indices but the $j$-th. Tuples are also the appropriate object to define extremality of jointly measurable collections of measurements.} of $m$ $n$-outcome measurements $\cA = [\bA^{(1)},\ldots,\bA^{(m)}]\in\cP(d,n)^{ m}$ is \textit{jointly measurable}, or \textit{compatible}, if there exists a \textit{joint measurement} $\bM = (M_{a_1 \ldots a_m})$, with $a_i\in\{1, \ldots ,n\}$ for all $i$, such that
\begin{subequations}\label{bulbasaur}
\begin{align}
&M_{a_1 \ldots a_m} \geq 0, \ \forall a_1, \ldots ,a_m \\
\sum_{s\neq j}\sum_{a_s=1}^n &M_{a_1 \ldots (a_j=i) \ldots a_m} = A^{(j)}_i, \ \forall j, i.
\end{align}
\end{subequations}
Hence, all POVM elements $A^{(j)}_{i}$ can be recovered by coarse-graining over $\bM$.
Notice that the normalisation $\sum_s\sum_{a_s}M_{a_1, \ldots ,a_m}=\II$ is guaranteed by the normalisation of any of the POVMs once the marginal constraints are satisfied, thus $\bM$ is a valid POVM.
We denote by $\cJ_m(d,n)$ the subset of $\cP(d,n)^{m}$ composed by jointly measurable tuples of $m$ POVMs.
Again, for ease of notation, we will use simply $\cP^m$ and $\cJ_m$ whenever the dimension and number of outcomes are redundant.

One can write a feasibility SDP (semidefinite programme) to decide whether a given tuple of POVMs $\cA$ is jointly measurable~\cite{wolf09}:
\begin{align}\label{squirtle}
\nonumber \mathrm{given}\ \ &\cA = [\bA^{(j)}] \\ \mathrm{find}\ \ &\bM = (M_{a_1 \ldots a_m}) \\
\nonumber \mathrm{s.t.}\ \ &M_{a_1 \ldots a_m} \geq 0,\ \forall a_1, \ldots ,a_m\\
\nonumber &\sum_{s\neq j}\sum_{a_s=1}^n M_{a_1 \ldots (a_j=i) \ldots a_m} = A^{(j)}_i,\ \forall j, i.
\end{align}
An SDP formulation of a problem is valuable since this is a class of problems that can be solved computationally in an efficient way.

We can depolarise a POVM $\bA$ by applying the \textit{depolarising map} 
\beq
\Phi_t: A_i \mapsto tA_i + (1-t)\frac{\tr(A_i)}{d}\II
\eeq
to each effect $A_i$, where $t\in [0,1]$ is called the \emph{visibility} of the depolarised POVM.
Hence we write
\beq
\Phi_t(\bA) := (\Phi_t(A_1),\ldots,\Phi_t(A_n)).
\eeq
Notice that depolarising $\bA$ is equivalent to mixing it with a trivial POVM having all effects proportional to the identity,
\beq\label{pidgey}
\Phi_t(\bA) = t\bA + (1-t)\bA^{\tr},
\eeq
where $\bA^{\tr}=(\tr(A_1)\II/d, \ldots ,\tr(A_n)\II/d)$.

By depolarising each POVM in a tuple $\cA=[\bA^{(j)}]$ of measurements we obtain a depolarised tuple
\beq
\Phi_t(\cA) := [\Phi_t(\bA^{(j)})].
\eeq 
Every tuple $\cA$ becomes jointly measurable if depolarised enough, \ie, for sufficiently small $t$.
It is straightforward to modify SDP (\ref{squirtle}) to find the maximum depolarisation parameter $t$ that makes a target tuple $[\bA^{(j)}]$ jointly measurable \cite{cavalcanti17}.

We are interested in characterising the tuples that admit a unique joint measurement.
Notice that if $\bM, \bM'$ are two joint measurements for a fixed tuple of measurements $\cA$, then any convex combination $\bN=p\bM+(1-p)\bM'$ satisfies Eqs. (\ref{bulbasaur}), showing that $\bN$ is also a joint measurement for this tuple.
Therefore the set of joint measurements for $\cA$ is convex.
In particular, since every parameter $p\in[0,1]$ provides a different joint $\bN$, we have that for every tuple of POVMs there are exactly zero (in the incompatible case), one or infinitely many joint measurements for it.

\section{Perturbations for POVMs}

We approach the problem of deciding whether a given compatible tuple of POVMs admits a unique joint measurement by adapting the perturbation method introduced in Ref. \cite{dariano05}.
For every measurement $\bA\in\cP(d,n)$, there are infinitely many tuples $\bD^\bA\in\Herm(\cH)^n$ that preserves its POVMness, \ie, satisfying
\begin{subequations}\label{bellsprout}
\begin{align}
A_{i} + D^A_{i} &\geq 0,\quad i=1,\ldots,n \\ \label{weepinbell}
\sum_i{D_i^A} &= 0
\end{align}
\end{subequations}
and therefore ensuring that $\bA + \bD^\bA$ is still a valid POVM.
Notice that we require no positive semi-definitiveness from the perturbation operators (nor could they all be positive semi-definite without being trivial, due to Eq. (\ref{weepinbell})).
We say that such tuples $\bD^\bA$ are \textit{POVM-preserving perturbations for $\bA$}.

\begin{example}
We can interpret a depolarised POVM $\Phi_t(\bA)$ as a perturbed version of $\bA$, as already hinted in Eq. (\ref{pidgey}).
For every $\bA\in\cP(d,n)$, the tuple $\bD^\tr$ defined by 
\beq\label{magikarp}
D_i^\tr =\frac{\tr(A_i)}{d}\II - A_i,
\eeq
for $i=1,\ldots,n$, satisfies Eqs. (\ref{bellsprout}), and also
\beq
\Phi_t(\bA) = \bA + (1-t)\bD^\tr.
\eeq
Therefore, $(1-t)\bD^{\tr}$ is a POVM-preserving perturbation for $\bA$ for any $t\in [0,1]$.
We call $\bD^\tr$ the \textit{depolarising perturbation} for $\bA$.
\end{example}

As already pointed in Ref. \cite{dariano05}, we can apply the perturbation method to decide on the extremality of a measurement.
Namely, a POVM $\bA$ is extremal if and only if it does not possess any non-null perturbation $\bD^\bA$ that is symmetric, in the sense that $-\bD^\bA$ is also a POVM-preserving perturbation for $\bA$.
Otherwise, there would be the convex decomposition
\beq
\bA = \frac12\left[(\bA+\bD^\bA) + (\bA-\bD^\bA)\right],
\eeq
witnessing the non-extremality of $\bA$.

The set $\cP(d,n)^{ m}$ of tuples of measurements inherits the convex structure of $\cP(d,n)$ once we define the convex combination componentwise.
Hence, we can introduce perturbations for a tuple of POVMs.
We say that $\cD^\cA=[\bD^{(j)}]_j\in\Herm(\cH)^{nm}$ is a \textit{POVM-preserving perturbation for $\cA$} if each $\bD^{(j)}$ is a POVM-preserving perturbation for $\bA^{(j)}$.
Thus, $\cA$ is an extremal tuple of POVMs (in $\cP(d,n)^{ m}$) if and only if there is no symmetric perturbation $\cD^\cA\neq 0$ for $\cA$.
Consequently, the extremality of $\cA$ is strongly bonded to the extremality of its elements.
For sake of completeness we prove the following characterisation, that holds for any Cartesian product such as $\cP(d,n)^{m}$.

\begin{theorem}\label{kubone}
A tuple of measurements $\cA=[\bA^{(1)},\ldots,\bA^{(m)}]\in\cP^{ m}$ is extremal in $\cP^{ m}$ if and only if each of its elements $\bA^{(j)}, j=1,\ldots,m,$ is extremal in $\cP$.
\end{theorem}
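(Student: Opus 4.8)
The plan is to reduce both implications to the perturbation characterisation of extremality recalled just above the statement: a POVM is extremal in $\cP$ if and only if it admits no non-null symmetric POVM-preserving perturbation, and a tuple is extremal in $\cP^{m}$ if and only if it admits no non-null symmetric POVM-preserving perturbation $\cD^\cA=[\bD^{(j)}]_j$. The elementary observation that does all the work is that, since the constraints in Eqs.~(\ref{bellsprout}) are imposed componentwise, a tuple perturbation $\cD^\cA$ is symmetric for $\cA$ precisely when each $\bD^{(j)}$ is symmetric for $\bA^{(j)}$, and $\cD^\cA$ is non-null precisely when at least one component $\bD^{(j)}$ is non-null.

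For the ``if'' direction, assume each $\bA^{(j)}$ is extremal and let $\cD^\cA=[\bD^{(j)}]_j$ be any symmetric POVM-preserving perturbation for $\cA$. By the observation above, each $\bD^{(j)}$ is a symmetric perturbation for the extremal POVM $\bA^{(j)}$, hence $\bD^{(j)}=0$ for every $j$, so $\cD^\cA=0$; thus $\cA$ admits no non-null symmetric perturbation and is extremal in $\cP^{m}$. Equivalently, and without invoking perturbations: if $\cA=p\cB+(1-p)\cC$ with $p\in(0,1)$, then componentwise $\bA^{(j)}=p\bB^{(j)}+(1-p)\bC^{(j)}$, and extremality of each $\bA^{(j)}$ forces $\bB^{(j)}=\bC^{(j)}=\bA^{(j)}$, whence $\cB=\cC=\cA$.

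For the ``only if'' direction I would argue by contraposition. Suppose some $\bA^{(k)}$ is not extremal, so it possesses a non-null symmetric perturbation $\bD^{(k)}$, or equivalently a decomposition $\bA^{(k)}=\frac12(\bB+\bC)$ with $\bB\neq\bC$. Define $\cD^\cA$ to have $k$-th component $\bD^{(k)}$ and all other components equal to the zero tuple in $\Herm(\cH)^n$: this is POVM-preserving for $\cA$ (the untouched slots remain valid POVMs), symmetric, and non-null, so $\cA$ is not extremal in $\cP^{m}$. In convex-combination language, the two tuples obtained from $\cA$ by replacing only the $k$-th slot with $\bB$ and with $\bC$ respectively are distinct and average to $\cA$.

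There is no real obstacle here; the statement is essentially the standard fact that a point of a Cartesian product of convex sets is extremal iff each of its coordinates is. The only step requiring a little care is the bookkeeping in the ``only if'' direction, where one must perturb (or split) a single designated component while holding the remaining ones fixed and verify that padding with zeros yields an admissible perturbation of the whole tuple, which is immediate from the componentwise form of Eqs.~(\ref{bellsprout}).
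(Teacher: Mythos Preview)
Your proposal is correct and follows essentially the same approach as the paper: both directions are reduced to the componentwise perturbation characterisation, with the ``only if'' direction handled by padding a single-slot symmetric perturbation with zeros, and the ``if'' direction by noting that a symmetric tuple perturbation restricts to symmetric perturbations of each component. The only cosmetic difference is that the paper phrases both implications via contraposition, whereas you argue the ``if'' direction directly (and additionally spell out the equivalent convex-combination version), but the underlying idea is identical.
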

\begin{proof}
We prove both directions by contraposition.
Suppose, without loss of generality, that $\bA^{(1)}$ is not extremal, and that $\bD\neq 0$ is a symmetric perturbation for it.
Hence, $\cD=[\bD,0,\ldots, 0]$ is a non-null symmetric perturbation for $\cA$, which therefore is not extremal.

Conversely, if $\cA$ is not extremal in $\cP^{ m}$, then there exists a non-null symmetric perturbation $\cD=[\bD^{(j)}]$.
Since $\bD^{(j_0)}$ is non-null for some $j_0$ and by definition it is a symmetric perturbation for $\bA^{(j_0)}$, we conclude that $\bA^{(j_0)}$ is not extremal.
\end{proof}

The set of main interest in this work is $\cJ_m$, the subset of $\cP^{ m}$ of jointly measurable tuples of measurements.
Given $\cA, \cB \in \cJ_m$, consider their joint measurements $\bM^\cA$ and $\bM^\cB$, respectively.
Any convex combination $\cC = t\cA + (1-t)\cB$ admits a joint measurement given by $\bM^\cC = t\bM^\cA + (1-t)\bM^\cB$, and therefore $\cJ_m(d,n)$ is a convex set.
Following the perturbation approach, we say that $\cA$ is extremal in $\cJ_m$ if there is no \textit{joint measurability-preserving (JM-preserving) perturbation} $\cD^\cA$ such that $\cA\pm\cD^\cA\in\cJ_m$ (see Figure \ref{spearow}).

\begin{figure}[h!]
\begin{center}
\includegraphics[width=0.9\linewidth]{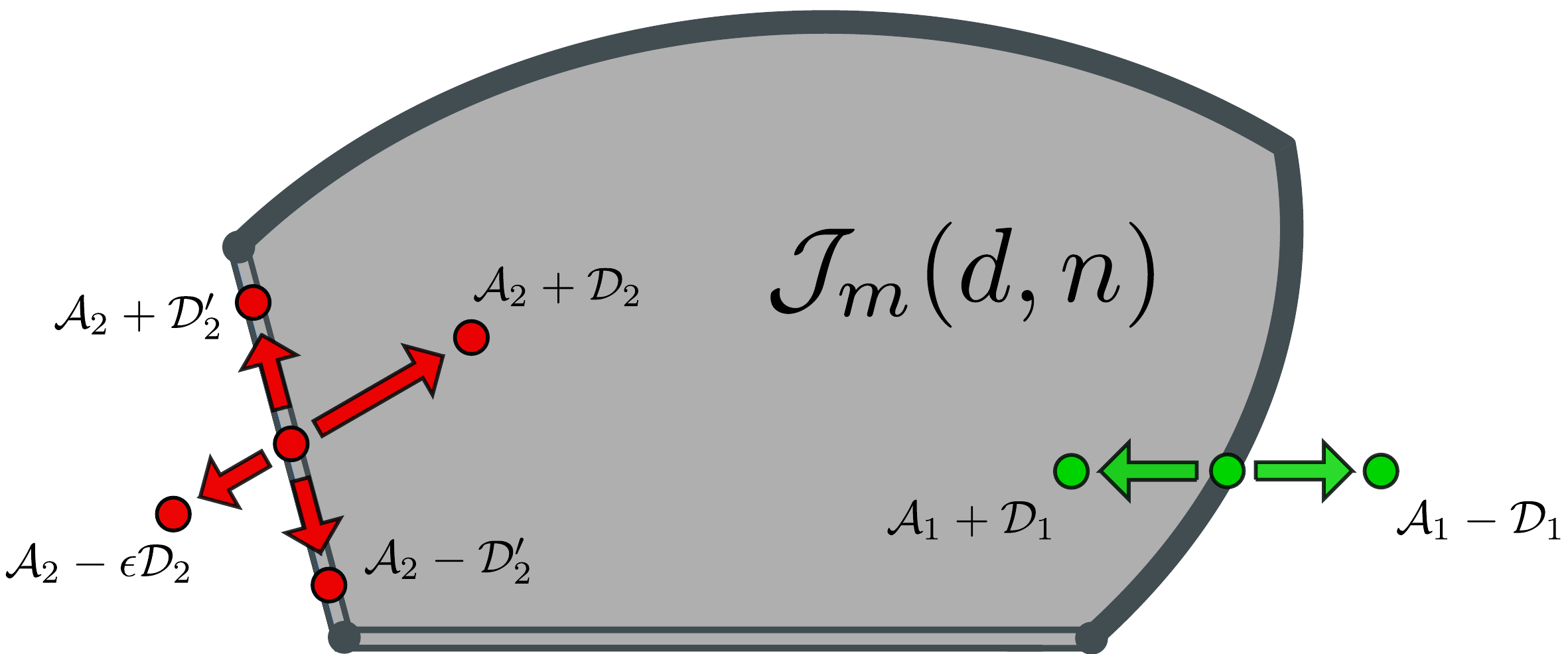}
	\caption{\small{\label{spearow}In green, an extremal compatible tuple $\cA_1$, for which any POVM-preserving perturbation ${\cD_1}$ satisfies $\cA_1-{\cD_1}\notin\cJ_m(d,n)$.
	In red, the compatible tuple $\cA_2$ lies in the boundary $\partial(\cJ_m(d,n))$, as witnessed by the perturbation $\cD_2$ (see Section \ref{jotoleague}), but it is not extremal, as witnessed by the perturbation $\cD_2'$.
}}
\end{center}
\end{figure}

Analogously to a perturbation that preserves POVMness, we can study perturbations that preserve the marginals of a joint measurement.
More specifically, given a joint measurement $\bM$ for a tuple $\cA$, we can search for a \textit{marginal-preserving perturbation} $\bD^\bM$ such that $\bM + \bD^\bM$ is still a joint measurement for $\cA$.
This is equivalent to say that
\begin{subequations}\label{geodude}
\begin{eqnarray}
M_{a_1 \ldots a_m} + D^M_{a_1 \ldots a_m} &\geq& 0, \ \forall a_1, \ldots ,a_m,\\ 
\sum_{s\neq j}\sum_{a_s=1}^n{D^M_{a_1 \ldots (a_j=i) \ldots a_m}} &=& 0, \ \forall j, i.
\end{eqnarray}
\end{subequations}
Notice that Eqs. (\ref{geodude}) also ensure that $\bD^\bM$ preserves the POVMness of $\bM$, hence a marginal-preserving perturbation is a particular case of a POVM-preserving perturbation.

%\leo{[Maybe what's next can be moved to an appendix, with the appropriate discussion about the objective function.]} 
As before, one can write an SDP to decide whether there is a perturbation that preserves the marginals for a given joint measurement $\bM$,
\begin{align}
\nonumber 
\mathrm{given} \ \ & \bM \\ \label{charmander}
\max_\bD \ \ & \text{tr}(D_{1 \ldots 1}\lambda_1)\\ \nonumber
\mathrm{s.t.} \ \ & M_{a_1 \ldots a_m} + D_{a_1 \ldots a_m} \geq 0, \ \forall a_1, \ldots ,a_m,\\
\nonumber
& \sum_{s\neq j}\sum_{a_s=1}^n{D_{a_1 \ldots (a_j=i) \ldots a_m}} = 0, \ \forall j, i,
\end{align}
where $\{\lambda_i\}$ is an orthogonal basis of the space of Hermitian operators and $D_{1\ldots1}$ is the first perturbation operator.
We write it as a maximisation of the first coefficient in the decomposition $D_{1 \ldots 1}= \sum_{i=1}^{d^2}\alpha_i\lambda_i$ only for ensuring a non-null optimal argument $\bD$ (in the case where there is one).
Hence the complete computational test for checking on the uniqueness of a joint measurement comprehends running the above SDP for each basis operator $\lambda_i$ and each perturbation operator $D_{d_1 \ldots d_n}$ in the objective function.

The above formulation provides a criterion for deciding on the uniqueness of the joint measurement $\bM$ for a given tuple $\cA$.

\begin{theorem} \label{snorlax}
Let $\cA$ be a jointly measurable tuple of measurements and $\bM$ a joint measurement for $\cA$.
Then $\bM$ is unique if and only if there is no marginal-preserving perturbation $\bD\neq 0$ for $\bM$.
\end{theorem}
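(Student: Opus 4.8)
The plan is to prove the equivalence by contraposition in both directions, exploiting the observation already recorded after Eqs.~(\ref{bulbasaur}) that the set of joint measurements of a fixed tuple is convex, so that ``$\bM$ is not unique'' simply means that $\cA$ admits a second joint measurement $\bM'\neq\bM$ (indeed infinitely many, by the same remark, though only the existence of one is needed here).

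First I would treat the direction ``$\bM$ not unique $\Rightarrow$ there is a nonzero marginal-preserving perturbation for $\bM$''. Given $\bM'\neq\bM$, both joint measurements for $\cA$, I set $\bD:=\bM'-\bM$, which is nonzero by assumption, and check Eqs.~(\ref{geodude}): componentwise positivity is immediate since $\bM+\bD=\bM'$ is a POVM, and the homogeneous marginal constraints hold because $\bM$ and $\bM'$ both coarse-grain to the same effects $A^{(j)}_i$, so the marginals of their difference vanish. Hence $\bD$ is a nonzero marginal-preserving perturbation for $\bM$.

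For the converse, ``$\bM$ unique $\Rightarrow$ no nonzero marginal-preserving perturbation for $\bM$'', I again argue contrapositively: if $\bD\neq 0$ is a marginal-preserving perturbation for $\bM$, then by its very definition $\bM+\bD$ is a joint measurement for $\cA$, and $\bM+\bD\neq\bM$ because $\bD\neq 0$; therefore $\bM$ is not the unique joint measurement of $\cA$.

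I do not expect a genuine obstacle: the statement is essentially a repackaging of the definitions. The only point deserving a line of care is the passage between the conceptual condition ``$\bM+\bD$ is again a joint measurement of $\cA$'' and the explicit linear system (\ref{geodude}) — namely, that the marginal constraints on $\bM+\bD$ reduce to the homogeneous constraints on $\bD$ precisely because $\bM$ already satisfies (\ref{bulbasaur}), and that the normalisation of $\bM+\bD$ requires no separate verification, as noted just after Eqs.~(\ref{bulbasaur}). Everything else is a one-line check.
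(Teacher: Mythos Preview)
Your proposal is correct and matches the paper's own proof essentially verbatim: both directions are handled by setting $\bD=\bM'-\bM$ (for ``not unique $\Rightarrow$ nonzero perturbation'') and $\bM'=\bM+\bD$ (for the converse), with the checks of Eqs.~(\ref{geodude}) following immediately from the definition. The extra remarks you include about normalisation and the reduction to homogeneous marginals are sound and only make explicit what the paper leaves implicit.
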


\begin{proof}
If there is such non-null perturbation $\bD$, then $\bM'= \bM + \bD$ is also a joint measurement for $\cA$.
On the other hand, if there is another joint measurement $\bM'\neq \bM$, then define the non-null perturbation $\bD = \bM'-\bM$.
\end{proof}

\section{Extremality and uniqueness}

Now that we have a criterion to decide on the uniqueness of joint measurements, in this section we study its relation with the extremality of the tuple of POVMs $\cA$ in question.
More specifically, we prove that the extremality of the tuple in the compatible set is sufficient to guarantee the uniqueness of the joint measurement.
The proof is based on the idea of constructing a JM-preserving perturbation for a compatible tuple starting from a marginal-preserving perturbation for its joint measurement.

\begin{theorem}\label{gyarados} Let $\cA \in\cJ_m$ be a jointly measurable tuple of measurements.
If $\cA$ is extremal in $\cJ_m$, then there exists a unique joint measurement for $\cA$.
\end{theorem}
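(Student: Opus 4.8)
The plan is to prove the contrapositive: assuming $\cA$ does \emph{not} admit a unique joint measurement, I will construct a non-null symmetric JM-preserving perturbation $\cD^\cA$, thereby witnessing that $\cA$ is not extremal in $\cJ_m$. By Theorem~\ref{snorlax}, the non-uniqueness of the joint measurement means there is a non-null marginal-preserving perturbation $\bD^\bM$ for some joint measurement $\bM$ of $\cA$, i.e.\ $\bD^\bM$ satisfies Eqs.~(\ref{geodude}). Since Eqs.~(\ref{geodude}) involve a strict-feasibility-type condition only through $M_{a_1\ldots a_m} + D^M_{a_1\ldots a_m} \geq 0$, I can rescale $\bD^\bM$ by a sufficiently small $\varepsilon > 0$ so that both $\bM \pm \varepsilon\bD^\bM \geq 0$; the linear marginal constraints are homogeneous, so they are preserved under rescaling and under sign flip. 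Hence $\bM \pm \varepsilon\bD^\bM$ are both valid joint measurements for $\cA$, i.e.\ they have the same marginals $A^{(j)}_i$.

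The key step is then to push this marginal-preserving perturbation down to the level of the tuple. Define $\cD^\cA = [\bD^{(1)},\ldots,\bD^{(m)}]$ by coarse-graining $\varepsilon\bD^\bM$ over all indices but the $j$-th:
\begin{equation}
D^{(j)}_i := \varepsilon\sum_{s\neq j}\sum_{a_s=1}^n D^M_{a_1\ldots(a_j=i)\ldots a_m}.
\end{equation}
By the second line of Eqs.~(\ref{geodude}) each $D^{(j)}_i = 0$ — wait, that is exactly the obstacle: a marginal-\emph{preserving} perturbation by definition produces \emph{zero} change in the marginals, so this naive coarse-graining gives $\cD^\cA = 0$ and says nothing. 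The correct construction must instead exploit $\bD^\bM$ differently: I claim $\cA$ itself can be convex-decomposed by observing that $\bM \pm \varepsilon\bD^\bM$ are two \emph{distinct} joint measurements of the \emph{same} tuple $\cA$, and then constructing the perturbation in a \emph{different} direction. Actually the cleanest route is: because $\bM$ is not the unique joint measurement, pick $\bM' = \bM + \varepsilon\bD^\bM \neq \bM$; both lie in the (convex) set of joint measurements of $\cA$. Now perturb one of the \emph{POVMs} in the tuple, say $\bA^{(1)}$, by a POVM-preserving perturbation $\bD$ in a direction that, together with the freedom provided by $\bM,\bM'$, can be "absorbed": set $\cD^\cA = [\bD, 0, \ldots, 0]$ and show $\cA \pm \cD^\cA \in \cJ_m$ by exhibiting explicit joint measurements for each, built as suitable modifications of $\bM$ using the slack encoded in $\bD^\bM$.

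The hard part — the genuine content of the theorem — is precisely making this absorption work: I need to show that the existence of a non-null $\bD^\bM$ with $\bM\pm\varepsilon\bD^\bM\geq 0$ guarantees enough "room" inside $\cP$ around some $\bA^{(j)}$ to define a non-null symmetric $\bD$ for which $\cA\pm[\bD,0,\ldots,0]$ remain compatible. The natural way is to transport positivity slack: if $M_{a_1\ldots a_m}\pm\varepsilon D^M_{a_1\ldots a_m}\geq 0$ for all cells, then for a suitable index $j_0$ the partial sums $\sum_{s\neq j_0}\sum_{a_s}(\pm\varepsilon D^M_{a_1\ldots(a_{j_0}=i)\ldots a_{j_0\text{-slot}}})$ — computed over a \emph{strict subset} of the complementary indices rather than all of them — give a non-null perturbation of $\bA^{(j_0)}$ whose compatibility-preserving joint measurement is a re-coarse-grained variant of $\bM\pm\varepsilon\bD^\bM$. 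I expect that verifying $\bD\neq 0$ (one must choose the subset and $j_0$ so cancellation does not occur, which is where $\bD^\bM\neq 0$ is really used) and checking the marginal constraints of the new joint measurement are the two places requiring care; everything else is the routine rescaling and sign-symmetry argument already sketched.
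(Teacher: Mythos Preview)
Your contrapositive strategy and your identification of the obstacle (naive marginalisation of a marginal-preserving perturbation yields zero) match the paper's approach exactly. However, there are two genuine problems.

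First, a minor but real error: the claim that one can always rescale to obtain $\bM \pm \varepsilon\bD^\bM \geq 0$ is false when $\bM$ has rank-deficient effects. If some $M_{a_1\ldots a_m}=0$ while $D^M_{a_1\ldots a_m}\neq 0$ (necessarily $\geq 0$), then $M_{a_1\ldots a_m}-\varepsilon D^M_{a_1\ldots a_m}\not\geq 0$ for every $\varepsilon>0$. The paper sidesteps this by working with \emph{both} joint measurements $\bM$ and $\bM'=\bM+\bD$ asymmetrically: the ``$+$'' object is built from $\bM$ and the ``$-$'' object from $\bM'$, so only $M_a+D_a=M'_a\geq 0$ and $M'_a-D_a=M_a\geq 0$ are ever needed. (Equivalently, you could pass to the midpoint $\tfrac12(\bM+\bM')$ first.)

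Second, and more seriously, the ``hard part'' you flag is the entire content of the theorem, and your sketch does not resolve it. Saying ``sum over a strict subset of indices and choose $j_0$ so cancellation does not occur'' is not a construction: a single such choice can very well give zero, and you give no mechanism for finding a good one. The paper's proof supplies exactly this mechanism via an \emph{iterative} procedure. One fixes (after relabelling) $D_{1\ldots 1}\neq 0$ and at step $k$ perturbs $\bM,\bM'$ by $\pm\delta_{1,a_1}\cdots\delta_{1,a_k}D_{a_1\ldots a_m}$; the marginals of these perturbed joint measurements define compatible tuples $\cA^{(k,\pm)}$ with $\cA=\tfrac12(\cA^{(k,+)}+\cA^{(k,-)})$. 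If this decomposition is trivial, its very triviality yields the identity $\sum_{s>k+1}\sum_{a_s}D_{1\ldots 1\,a_{k+2}\ldots a_m}=0$, which is precisely what is needed to make the step-$(k{+}1)$ perturbation normalisation-preserving. The iteration must terminate nontrivially by step $m{-}1$, because at that stage the perturbation of $\bA^{(m)}$ is $\pm D_{1\ldots 1\,i}$, which is nonzero for $i=1$ by hypothesis. Without this inductive ``failure-feeds-the-next-attempt'' structure, there is no guarantee any single partial sum is nonzero, and your proposal as written is a plan rather than a proof.
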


\begin{proof}
We prove the contrapositive of the above statement, namely that if the joint measurement is not unique, then $\cA$ is not extremal in $\cJ_m$.
More specifically, we provide an algorithmic method that finds within $m$ steps a non-trivial decomposition of $\cA$ into compatible tuples whenever the joint measurement is not unique.

Suppose that $\bM,\bM'$ are two distinct joint measurements for $\cA$, and define $\bD = \bM'- \bM$.
Hence the perturbation $\bD$ satisfies
\begin{subequations}
\begin{align}
M_{a_1 \ldots a_m} + D_{a_1 \ldots a_m} \geq 0 , \ \forall a_1, \ldots ,a_n\\
M'_{a_1 \ldots a_m} - D_{a_1 \ldots a_m} \geq 0 , \ \forall a_1, \ldots ,a_n\\ \label{gastly}
\sum_{a_r\neq a_j}{D_{a_1 \ldots (a_j=i) \ldots a_m}} = 0,  \ \forall j, i.
\end{align}
\end{subequations}

Since $\bM\neq \bM'$, there is at least one non-null perturbation operator $D_{a_1 \ldots a_m}$.
Without loss of generality, lets assume $D_{1 \ldots 1}\neq 0$; otherwise we could consider a relabeling of the POVMs in $\cA$ whose joint measurements satisfy this property, and both joint measurability and number of joint measurements are preserved by relabelings.
Since each marginal of $\bD$ vanishes, our strategy is to use them as POVM-preserving perturbations for $\bM$ and $\bM'$ in order to construct a symmetric JM-preserving perturbation for $\cA$.

Consider the POVMs $\bM^{(1,+)}, \bM^{(1,-)}$ given by effects 
\begin{subequations}\label{kakuna}
\begin{align}
M^{(1,+)}_{a_1 \ldots a_m} = M_{a_1 \ldots a_m} + \delta_{1,a_1}D_{a_1 \ldots a_m} \\
M^{(1,-)}_{a_1 \ldots a_n} = M'_{a_1 \ldots a_m} - \delta_{1,a_1}D_{a_1 \ldots a_m}
\end{align}
\end{subequations}
where $\delta_{1,a_1}$ equals 1 if $a_1=1$ and 0 otherwise.
These operators are positive semi-definite and $\bM^{(1,+)}, \bM^{(1,-)}$ are normalised due to Eq. (\ref{gastly}).
Hence, their marginals define valid POVMs, given by
\begin{subequations}\begin{align}\nonumber
\sum_{s\neq j}&\sum_{a_s=1}^n {M^{(1,\pm)}_{a_1\ldots (a_j=i)\ldots a_m}} \\
&= A^{(j)}_i \pm (1-\delta_{1,j})\sum_{s\neq 1,j}\sum_{a_s=1}^n D_{(a_1=1)\ldots(a_j=i)\ldots a_m} \\
\label{ponyta}
&=: A^{(j,1,\pm)}_i.
\end{align}\end{subequations}
These define valid tuples of POVMs 
\beq
\cA^{(1,\pm)} = [\bA^{(j,1,\pm)}]_j
\eeq
which are jointly measurable by definition, since each was born from the joint measurement $\bM^{(1,\pm)}$.
Thus we obtained the decomposition
\beq
\cA = \frac12 (\cA^{(1,+)} + \cA^{(1,-)}).
\eeq
If $\cA^{(1,\pm)}\neq \cA$, this decomposition is non-trivial and we are done.
If not, we have that for all $j=1,\ldots,m$, $\bA^{(j,1,\pm)} = \bA^{(j)}$ and in particular $A^{(2,1,\pm)}_1 = A^{(2)}_1$, which implies
\begin{equation}
\sum_{s>2}\sum_{a_s=1}^n D_{11a_3\ldots a_m} = 0.
\end{equation}
Taking this marginal as perturbation, we construct now the POVMs
\begin{subequations}
\begin{align}
M^{(2,+)}_{a_1 \ldots a_m} = M_{a_1 \ldots a_m} + \delta_{1,a_1}\delta_{1,a_2}D_{a_1 \ldots a_m} \\
M^{(2,-)}_{a_1 \ldots a_n} = M'_{a_1 \ldots a_m} - \delta_{1,a_1}\delta_{1,a_2}D_{a_1 \ldots a_m},
\end{align}
\end{subequations}
whose marginals define effects 
\begin{align}\nonumber
A^{(j,2,\pm)}_i :=& A^{(j)}_i \pm (1-\delta_{j,1})(1-\delta_{j,2})\\
&\times \sum_{s\neq 1,2,j}\sum_{a_s=1}^n D_{11a_3\ldots(a_j=i)\ldots a_m}
\end{align}
and tuples $\cA^{(2,\pm)}=[\bA^{(j,2,\pm)}]_j$, satisfying $\cA = (\cA^{(2,+)}+\cA^{(2,-)})/2$.
Again, if $\cA^{(2,\pm)}\neq \cA$, we are done. 
If not, we have $A^{(3,2,\pm)}_1 = A^{(3)}_1$ and
\begin{equation}
\sum_{s>3}\sum_{a_s=1}^n D_{111a_4\ldots a_m} = 0,
\end{equation}
for all $j=3,\ldots,m,\ i=1,\ldots,n$.
We use now this marginal as perturbation for $\bM, \bM'$ and construct $\bM^{(3,\pm)}$, repeating the process.
Thus at each round $k$ we can construct a decomposition into compatible tuples $\cA = (\cA^{(k,+)}+\cA^{(k,-)})/2$, that if not trivial concludes the proof, and otherwise provides a new restriction
\begin{equation}
\sum_{s>k+1}\sum_{a_s=1}^n D_{1\ldots1a_{k+2}\ldots a_{m}} = 0,
\end{equation}
that is used as perturbation for $\bM,\bM'$ in the next round.

If after $m-2$ repetitions all yielded decompositions are trivial, we take the last step and obtain POVMs given by
\begin{equation}
A^{(j,m-1,\pm)}_i = A^{(j)}_i \pm \delta_{j,m}D_{1\ldots1i},
\end{equation}
where we used $\delta_{j,m} = (1-\delta_{j,1})\ldots(1-\delta_{j,m-1})$, which compose compatible tuples $\cA^{(m-1,\pm)}=[\bA^{(j,m-1,\pm)}]_j$.
These form a decomposition for $\cA$ for which we guarantee that $\cA^{(m-1,\pm)}\neq \cA$, since 
\begin{equation}
D_{1\ldots1}\neq0 \implies A^{(m,m-1,\pm)}_1 \neq A^{(m)}_1.
\end{equation}
Hence, $\cA$ is not extremal in $\cJ_m$.
\end{proof}

The proof of Theorem \ref{gyarados} allows for the following corollaries, further relating properties of the tuple with features of the joint measurement.
Corollary \ref{exeggcutor} and a weaker version of Corollary \ref{exeggcute} (restricted to $m=2$ and requiring the tuple to be extremal both on $\cP^{m}$ and $\cJ_m$) were already proved in Ref. \cite{haapasalo14}, in the context of operator algebras.

\begin{coro}\label{exeggcutor}
Let $\cA = [\bA^{(1)},\ldots,\bA^{(m)}]\in\cJ_m$ be a jointly measurable tuple of measurements.
If any $\bA^{(j)}$ in $\cA$ is an extremal measurement, then $\cA$ admits a unique joint measurement.
\end{coro}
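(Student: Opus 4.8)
The plan is to prove the contrapositive: if $\cA$ admits two distinct joint measurements, then the designated $\bA^{(j_0)}$ fails to be extremal in $\cP$. The cleanest instance is when \emph{every} component of $\cA$ is extremal: by Theorem~\ref{kubone} the tuple $\cA$ is then extremal in $\cP^{m}$, hence \emph{a fortiori} extremal in the subset $\cJ_m\subseteq\cP^{m}$ (a convex decomposition of $\cA$ inside $\cJ_m$ is also one inside $\cP^{m}$), so Theorem~\ref{gyarados} immediately yields uniqueness of the joint measurement. The real content is therefore to obtain the same conclusion from the extremality of a \emph{single} component.

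For that I would feed the algorithm from the proof of Theorem~\ref{gyarados} into the perturbation test for single POVMs recalled after Example~1. Assume $\bM\neq\bM'$ are joint measurements for $\cA$ and set $\bD=\bM'-\bM\neq0$. Relabel the measurements so that the extremal one occupies the last slot, $\bA^{(j_0)}=\bA^{(m)}$, and relabel outcomes so that $D_{1\ldots1}\neq0$ (relabelings preserve joint measurability and the number of joint measurements). Running that algorithm, its \emph{final} step produces a decomposition $\cA=\tfrac12(\cA^{(m-1,+)}+\cA^{(m-1,-)})$ in which $\cA^{(m-1,\pm)}$ differs from $\cA$ only in the last component, with $A^{(m,m-1,\pm)}_1=A^{(m)}_1\pm D_{1\ldots1}\neq A^{(m)}_1$. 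This exhibits $\bA^{(m)}=\tfrac12\bigl(\bA^{(m,m-1,+)}+\bA^{(m,m-1,-)}\bigr)$ with $\bA^{(m,m-1,+)}\neq\bA^{(m)}$, i.e. a non-null symmetric POVM-preserving perturbation for $\bA^{(m)}$ in the sense of Eqs.~(\ref{bellsprout}), contradicting its extremality.

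The step I expect to be the main obstacle is exactly that the algorithm of Theorem~\ref{gyarados} is guaranteed to \emph{reach} this final step only if every intermediate round returns a trivial decomposition; should it halt earlier, at some round $k<m-1$, the decomposition it delivers differs from $\cA$ in components $k+1,\dots,m$ and need not move the last component at all, so it certifies non-extremality of $\cA$ in $\cJ_m$ rather than non-extremality of the chosen $\bA^{(m)}$. Closing this gap is the heart of the argument: one must either argue that the relabelling freedom can always be arranged so that the component actually decomposed is the extremal one, or — more robustly — combine the partial constraints produced at an early halt with the characterisation of extremal POVMs from Ref.~\cite{dariano05} (no Hermitian tuple $\{E_i\}$ with $\sum_iE_i=0$ and $\mathrm{supp}(E_i)\subseteq\mathrm{supp}(A^{(m)}_i)$ other than $E_i\equiv0$), iteratively refining zero-sum blocks of $\bD$ by the last index until a non-zero symmetric perturbation of $\bA^{(m)}$ is forced.
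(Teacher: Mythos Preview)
Your approach is the paper's: relabel so that the extremal measurement sits in the last slot, run the algorithm from Theorem~\ref{gyarados}, and read off a nontrivial decomposition of $\bA^{(m)}$ at the final round. You also correctly identify the obstacle --- the algorithm as written in Theorem~\ref{gyarados} halts as soon as \emph{some} component is moved, which may happen at a step $k<m-1$ without touching $\bA^{(m)}$ --- and you list two possible fixes without committing to either.

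The paper's proof closes this gap via precisely your first option. Its key observation is that advancing from step $k$ to step $k+1$ in the algorithm only uses the single equality $A^{(k+1,k,\pm)}_1=A^{(k+1)}_1$, and that ``any effect $A^{(l)}_1,\ l=k+1,\ldots,m$, provides an analogous condition.'' In other words, the order in which indices are frozen is a free choice of the recipe; the paper alters it so that the condition coming from the target component $\bA^{(j)}$ is deferred to the very last step, which forces the terminal decomposition to be nontrivial precisely at $\bA^{(j)}$. (For $j=1$ one kickstarts with $\delta_{1,a_2}$ instead of $\delta_{1,a_1}$.) Your second proposed route --- iterating the support characterisation of extremal POVMs from Ref.~\cite{dariano05} to force a nonzero symmetric perturbation of $\bA^{(m)}$ --- is not used and not needed in the paper.

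So your plan matches the paper's, but you stop short of the actual argument: the reordering you mention as ``one possible way out'' \emph{is} the paper's proof, and you should commit to it rather than flag it as an unresolved alternative.
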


\begin{proof}
The proof of Theorem \ref{gyarados} shows that if there is more than one joint measurement for the tuple, there is an algorithm that provides a non-trivial decomposition $\bA^{(m)} = (\bA^{(m,j,+)} + \bA^{(m,j,-)})/2$, where in the worst case $j=m-1$.
The latter is a direct consequence of the fact that at each step $k$ of the algorithm we choose $A^{(k+1)}_1=A^{(k+1,k,\pm)}_1$ to yield the next perturbation and $k=m-1$ is the last choice, but any effect $A^{(l)}_1,\ l=k+1,\ldots,m,$ provides an analogous condition.
If we alter the recipe such that the condition corresponding to $A^{(j)}_1$ is chosen at last, for any $j=2,\ldots,m$, it culminates in a non-trivial decomposition for $\bA^{(j)}$.
We extend this to $j=1$ by placing $\delta_{1,a_2}$ in (\ref{kakuna}), to kickstart the process.
Therefore, various joint measurements for $\cA$ allow us to obtain a non-trivial decomposition for any measurement $\bA^{(j)}$ in the tuple.
\end{proof}

\begin{coro}\label{exeggcute}
$\cA = [\bA^{(1)},\ldots,\bA^{(m)}]\in\cJ_m(d,n)$ is extremal in $\cJ_m(d,n)$ if and only if its (unique) joint measurement is extremal in $\cP(d,n^m)$.
\end{coro}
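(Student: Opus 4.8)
The plan is to prove the two implications by transporting nontrivial convex decompositions between the ``tuple level'' $\cJ_m(d,n)$ and the ``joint-measurement level'' $\cP(d,n^m)$, using two facts already recorded in the Preliminaries: (a) the $m$ marginals of any POVM $\bN\in\cP(d,n^m)$ form a tuple in $\cJ_m(d,n)$ for which $\bN$ is a joint measurement; and (b) a convex combination of joint measurements for $\cA_1$ and $\cA_2$ is a joint measurement for the corresponding convex combination of $\cA_1$ and $\cA_2$. Here ``extremal'' is read in the convex sense (no nontrivial convex decomposition), and we recall that, by Theorem \ref{gyarados}, a tuple $\cA$ that is extremal in $\cJ_m$ automatically admits a unique joint measurement, so the phrase ``its joint measurement'' is well defined whenever we invoke it.

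For the direction ``$\Rightarrow$'', assume $\cA$ is extremal in $\cJ_m(d,n)$ and let $\bM$ be its (unique, by Theorem \ref{gyarados}) joint measurement. Suppose for contradiction that $\bM=p\bN_1+(1-p)\bN_2$ with $p\in(0,1)$ and $\bN_1\neq\bN_2$ in $\cP(d,n^m)$. By (a), the marginal tuples $\cA_1,\cA_2$ of $\bN_1,\bN_2$ lie in $\cJ_m(d,n)$, and taking marginals of the displayed identity gives $\cA=p\cA_1+(1-p)\cA_2$. Extremality of $\cA$ forces $\cA_1=\cA_2=\cA$, so both $\bN_1$ and $\bN_2$ are joint measurements for $\cA$; uniqueness then gives $\bN_1=\bN_2$, a contradiction. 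Hence $\bM$ is extremal in $\cP(d,n^m)$.

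For the direction ``$\Leftarrow$'', assume the unique joint measurement $\bM$ of $\cA$ is extremal in $\cP(d,n^m)$, and suppose $\cA=p\cA_1+(1-p)\cA_2$ with $p\in(0,1)$ and $\cA_1\neq\cA_2$ in $\cJ_m(d,n)$. Choosing joint measurements $\bN_1,\bN_2$ for $\cA_1,\cA_2$, fact (b) makes $\bN:=p\bN_1+(1-p)\bN_2$ a joint measurement for $\cA$, so $\bN=\bM$ by uniqueness. Extremality of $\bM$ in $\cP(d,n^m)$ then yields $\bN_1=\bN_2=\bM$, whence $\cA_1$ and $\cA_2$ both coincide with the marginal tuple of $\bM$, i.e.\ $\cA_1=\cA_2=\cA$, contradicting $\cA_1\neq\cA_2$. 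Hence $\cA$ is extremal in $\cJ_m(d,n)$.

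The one genuinely delicate point, and where I expect the obstacle to lie, is the use of uniqueness in the ``$\Leftarrow$'' step: it is precisely what legitimises the identification $\bN=\bM$, and it cannot be dropped. Indeed, a joint measurement that is extremal in $\cP(d,n^m)$ need not be the only joint measurement of its marginal tuple --- for instance a qubit SIC-POVM is extremal in $\cP(2,4)$ (its four rank-one effects span the Hermitian operators, killing every symmetric perturbation) and yet admits a nonzero marginal-preserving perturbation, so by Theorem \ref{snorlax} its marginal pair has infinitely many joint measurements and is therefore not extremal in $\cJ_2(2,2)$. Thus the parenthetical ``(unique)'' in the statement is essential; the ``$\Rightarrow$'' direction above shows it is harmless, since whenever $\cA$ is extremal in $\cJ_m$ Theorem \ref{gyarados} provides the uniqueness for free.
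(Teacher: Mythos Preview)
Your proof is correct and follows essentially the same strategy as the paper: in each direction one transports a nontrivial convex decomposition between $\cJ_m(d,n)$ and $\cP(d,n^m)$ via the marginal map. The paper phrases everything in the language of symmetric perturbations and, in the ``$\Rightarrow$'' direction, splits into the two cases ``$\bD$ is marginal-preserving'' versus ``$\bD$ is not''; you instead invoke Theorem~\ref{gyarados} up front to secure uniqueness and thereby handle both cases at once, which is a bit cleaner. In the ``$\Leftarrow$'' direction the paper simply exhibits a non-extremal joint measurement $\bM=(\bM^{(+)}+\bM^{(-)})/2$ without explicitly appealing to uniqueness, whereas you use uniqueness to identify this $\bM$ with the given one; both arguments establish the same contrapositive. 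Your closing remark that the hypothesis ``unique'' cannot be dropped---illustrated by the SIC-POVM, which is extremal in $\cP(2,4)$ yet admits a nonzero marginal-preserving perturbation $D_{ab}=(-1)^{a+b}D$---is correct and is a useful complement that the paper does not make explicit at this point.
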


\begin{proof}
We start by proving the "only if" part.
Suppose $\bM$ is a joint measurement for $\cA$ and that exists a perturbation $\bD$ for $\bM$ such that $\bM\pm \bD \in \cP(d,n^m)$.
If $\bD$ is a marginal-preserving perturbation, then $\cA$ admits more than one joint measurement and by Theorem \ref{gyarados} we have that $\cA$ is not extremal.
It is left to prove now the case in which $\bD$ is not marginal-preserving.

For each $j\in\{1,\ldots,m\}$ and $i\in\{1,\ldots,n\}$, consider 
\beq
A^{(j,\pm)}_{a_j}=\sum_{r\neq j}\sum_{a_r}M_{a_1\ldots (a_j=i)\ldots a_m} \pm D_{a_1\ldots (a_j=i)\ldots a_m}.
\eeq
This defines measurements $\bA^{(j,\pm)}$, and since $\bD$ is not marginal-preserving we have 
\beq
\sum_{r\neq j}\sum_{a_r}D_{a_1\ldots (a_j=i)\ldots a_m}\neq 0
\eeq
for at least one choice of $j$ and $i$.
Hence, $\bA^{(j,\pm)}\neq \bA^{(j)}$ and for $\cA^{(\pm)} = [\bA^{(j,\pm)}]_j$ we have the non-trivial decomposition
\beq
\cA = \frac 12 (\cA^{(+)} + \cA^{(-)}).
\eeq 
Since $\cA^{(+)},\cA^{(-)}\in\cJ_m(d,n)$, we conclude that $\cA$ is not extremal in $\cJ_m(d,n)$.

Conversely, suppose that $\cA$ is not extremal in $\cJ_m(d,n)$.
Then there exist compatible tuples $\cA^{(+)}\neq \cA^{(-)}\in\cJ_m(d,n)$ such that $\cA=(\cA^{(+)} + \cA^{(-)})/2$.
Denoting by $\bM^{(+)}$ and $\bM^{(-)}$ their joint measurements, respectively, we see that $\bM^{(+)}\neq\bM^{(-)}$ and the measurement defined by $\bM:=(\bM^{(+)} + \bM^{(-)})/2$ is a joint measurement for $\cA$ that is clearly not extremal in $\cP(d,n^m)$.
\end{proof}

\begin{figure}[h!]
\begin{center}
\includegraphics[width=1\linewidth]{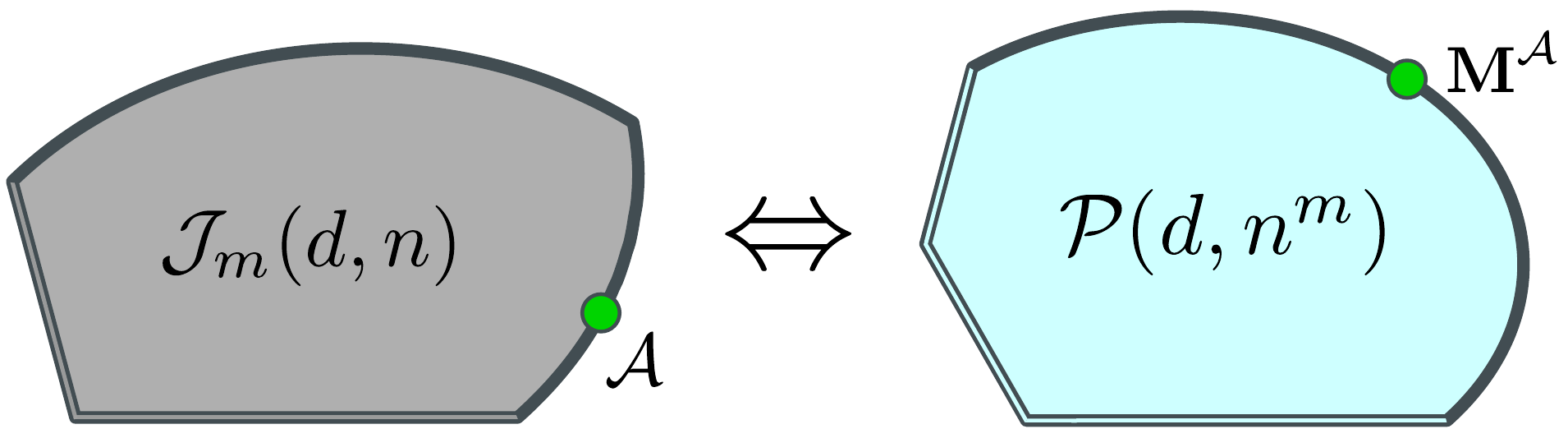}
	\caption{\small{A pictorial representation of Corollary \ref{exeggcute}: a tuple of measurements $\cA$ is extremal in the compatible set $\cJ_m(d,n)$ if and only if its unique joint measurement $\bM^\cA$ is extremal in the set $\cP(d,n^m)$.}}
\end{center}
\end{figure}

Since extremal measurements in dimension $d$ have at most $d^2$ non-null effects~\cite{dariano05}, Corollary \ref{exeggcute} implies that any joint measurement with more than $d^2$ non-null effects corresponds to non-extremal tuples.

The following example illustrates Corollary \ref{exeggcutor} and at the same time shows that the converse of Theorem \ref{gyarados} is false.

\begin{example}\label{lickitung}
Consider the pair of trivial POVMs $\cA = [(\II, 0), (\II/2, \II/2)]\in\cJ_2(2,2)$, which can be decomposed in two trivially compatible tuples,
\beq
\cA = \frac12\left([(\II,0),(\II,0)]+[(\II,0),(0,\II)]\right),
\eeq
and therefore is not extremal in $\cJ_2(2,2)$.
A joint measurement $\bM$ for $\cA$ should satisfy $M_{21}+M_{22} = 0$, hence $M_{21}=0=M_{22}$.
Since the other marginals imply $M_{11} = M_{11} + M_{21}=\II/2$, and analogously $M_{12}= M_{12} + M_{22} = \II/2$, we see that these conditions define uniquely each effect of the joint measurement $\bM$.
Therefore, not all tuples that admit a unique joint measurement are extremal.
\end{example}

However, every time we have a tuple of trivial POVMs and at least two of them are not deterministic, we can show that it admits an infinite number of joint measurements.
\begin{prop}\label{electabuzz}
Let $\cA=[\vec p^{(1)}\II,\ldots,\vec p^{(m)}\II]$ be a tuple of trivial POVMs, where each $\vec p^{(j)}$ is an $n$-dimensional probability vector and $\vec p^{(j)}\II := (p^{(j)}_1\II, \ldots ,p^{(j)}_n\II)$.
If at least two vectors $\vec p^{(j_0)}, \vec p^{(j_1)}$ are not deterministic, \ie, $\vec p^{(j_0)}, \vec p^{(j_1)}\in [0,1)^{ n}$, then $\cA$ has an infinite number of joint measurements.
\end{prop}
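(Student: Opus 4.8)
The plan is to exhibit an explicit one-parameter family of joint measurements for $\cA$, or equivalently (by Theorem~\ref{snorlax}) a non-null marginal-preserving perturbation for one joint measurement of $\cA$. A convenient reference joint measurement is the product POVM $\bM$ with effects $M_{a_1\ldots a_m}=\bigl(\prod_{k=1}^m p^{(k)}_{a_k}\bigr)\II$: its $j$-th marginal is $\sum_{s\neq j}\sum_{a_s}M_{\ldots(a_j=i)\ldots}=p^{(j)}_i\bigl(\prod_{k\neq j}\sum_{a_k}p^{(k)}_{a_k}\bigr)\II=p^{(j)}_i\II=A^{(j)}_i$, so $\bM$ is a joint measurement for $\cA$. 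The idea is then to deform $\bM$ inside the two-dimensional plane spanned by the $j_0$-th and $j_1$-th labels while leaving every marginal untouched.

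Concretely, assume without loss of generality $j_0=1$ and $j_1=2$. Since $\vec p^{(1)},\vec p^{(2)}$ are non-deterministic, each of them has at least two strictly positive entries (a probability vector all of whose entries are $<1$ cannot be supported on a single outcome), so we may choose $b_1\neq b_1'$ with $p^{(1)}_{b_1},p^{(1)}_{b_1'}>0$ and $b_2\neq b_2'$ with $p^{(2)}_{b_2},p^{(2)}_{b_2'}>0$; for each $k\geq 3$ fix, moreover, an index $c_k$ with $p^{(k)}_{c_k}>0$. Let $\Delta$ be the array indexed by $\{b_1,b_1'\}\times\{b_2,b_2'\}$ with $\Delta_{b_1 b_2}=\Delta_{b_1' b_2'}=+1$ and $\Delta_{b_1 b_2'}=\Delta_{b_1' b_2}=-1$, and set
\[
D_{a_1\ldots a_m}=\Delta_{a_1 a_2}\,\Bigl(\prod_{k=3}^m \delta_{a_k,c_k}\Bigr)\,\II ,
\]
with the convention $\Delta_{a_1 a_2}=0$ unless $a_1\in\{b_1,b_1'\}$ and $a_2\in\{b_2,b_2'\}$. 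Then $\bD\neq 0$ since $D_{b_1 b_2 c_3\ldots c_m}=\II$. Every row sum and every column sum of $\Delta$ vanishes, and so does its total sum; hence summing $\bD$ over all labels but the $j$-th gives a (vanishing) row sum of $\Delta$ for $j=1$, a vanishing column sum for $j=2$, and the vanishing total sum of $\Delta$ for $j\geq 3$. So $\bD$ satisfies the marginal-vanishing conditions in Eqs.~(\ref{geodude}).

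It remains to check positivity, which is the only delicate point: some $p^{(j)}_i$ may be zero, so that the corresponding effects of $\bM$ are null and cannot be perturbed in every direction. The construction sidesteps this by localising $\bD$: $D_{a_1\ldots a_m}$ is non-zero only when $a_1\in\{b_1,b_1'\}$, $a_2\in\{b_2,b_2'\}$ and $a_k=c_k$ for $k\geq 3$, and for such labels $M_{a_1\ldots a_m}=\bigl(p^{(1)}_{a_1}p^{(2)}_{a_2}\prod_{k\geq 3}p^{(k)}_{c_k}\bigr)\II$ is a strictly positive multiple of $\II$ by our choices. Letting $\epsilon_0>0$ be the minimum of these (at most four) positive coefficients, $\bM+\epsilon\bD\geq 0$ for all $|\epsilon|\leq\epsilon_0$, and $\bM+\epsilon\bD$ is still a joint measurement for $\cA$ by the previous paragraph. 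Thus $\{\bM+\epsilon\bD:\epsilon\in[-\epsilon_0,\epsilon_0]\}$ is an infinite family of joint measurements for $\cA$ (equivalently, $\epsilon_0\bD$ is a non-null marginal-preserving perturbation and Theorem~\ref{snorlax} yields non-uniqueness, hence infinitely many joint measurements by the dichotomy noted in the Preliminaries). The main work is therefore not the algebra but the bookkeeping that makes the localised perturbation simultaneously non-null, marginal-preserving, and supported where $\bM$ is invertible — and it is precisely the hypothesis that two of the $\vec p^{(j)}$ are non-deterministic that provides the two independent positive entries needed to build $\Delta$.
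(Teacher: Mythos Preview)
Your proof is correct and essentially identical to the paper's own argument: both start from the product joint measurement $M_{a_1\ldots a_m}=\bigl(\prod_k p^{(k)}_{a_k}\bigr)\II$, localise a $\pm1$ sign pattern on a $2\times 2$ block in the two non-deterministic coordinates at fixed strictly positive indices in the remaining coordinates, verify that all marginals of the resulting $\bD$ vanish, and scale by the minimum of the four strictly positive product coefficients to ensure positivity. The only cosmetic difference is that the paper relabels so that the relevant indices are all $1$ or $2$, whereas you keep generic labels $b_1,b_1',b_2,b_2',c_k$.
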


\begin{proof}
Note that every tuple of trivial POVMs $\cA$ admits a joint measurement $\bM$ defined by
\beq\label{onix}
M_{a_1 \ldots a_m} = \prod_{s=1}^m{p_{a_s}^{(s)}\II}.
\eeq 
Without loss of generality, assume that $\vec p^{(1)}, \vec p^{(2)}$ are non-deterministic, and hence each vector has two non-null entries, say $p^{(1)}_1, p^{(1)}_2, p^{(2)}_1, p^{(2)}_2 > 0$.
Furthermore, assume that $p^{(j)}_1 >0,\ \forall j>2$ (since some entry of every $\vec p ^{(j)}$ must be non-null).

Let us now construct a marginal-preserving perturbation for $\bM$.
Consider $\alpha =$ $\min \{p^{(1)}_a p^{(2)}_b p^{(3)}_1  \ldots  p^{(m)}_1 ; a,b\in\{1,2\}\}$.
Given our assumptions, this set contains only strictly positive elements, and therefore $\alpha>0$.
Notice now that
\begin{subequations}
\begin{align}
M_{111 \ldots 1} + \alpha\II &\geq 0 \\
M_{121 \ldots 1} - \alpha\II &\geq 0 \\
M_{211 \ldots 1} - \alpha\II &\geq 0 \\
M_{221 \ldots 1} + \alpha\II &\geq 0.
\end{align}
\end{subequations}
This shows that the perturbation $\bD$ defined by
\beq\nonumber
D_{a_1 \ldots a_m} =
\begin{cases}
\alpha \II, &\text{if}\ \ a_1 \ldots a_m \in \{111 \ldots 1, 221 \ldots 1\} \\
%a_1a_2\in\{1,2\},\ a_1=a_2,\ a_j=1,\ \forall j>2\\
-\alpha \II, &\text{if}\ \ a_1 \ldots a_m \in \{121 \ldots 1, 211 \ldots 1\} \\
%a_1,a_2 \in \{1,2\},\ a_1\neq a_2,\ a_j=1,\ \forall j>2 \\
0, &\text{otherwise}\ \ 
\end{cases}
\eeq
preserves the marginals of $\bM$, since it is straightforward to check that the marginals of $\bD$ sum up to zero.
We now apply Proposition \ref{snorlax} to conclude that there are multiple joint measurements for $\cA$.
\end{proof}

Trivial measurements can be argued to be the simplest class of measurements, and therefore it is natural to study its properties first.
Here we have a second motivation to do so: recalling Eq. (\ref{pidgey}), we see that depolarised versions of a measurement can be interpreted as combinations of the original measurement with a trivial one related to it.

\section{Boundary and uniqueness}\label{jotoleague}

We consider now the boundary of the sets $\cP^{m}$ and $\cJ_m$ of general and jointly measurable tuples of $m$ POVMs.
The \textit{boundary} $\partial(\cP^{m})$ is the set of tuples of POVMs $\cA\in\cP^{m}$ for which there exists a perturbation $\cD^\cA$ such that $\cA + \cD^\cA \in \cP^{m}$, and for all $\epsilon>0$, $\cA - \epsilon\cD^\cA \notin \cP^{m}$.
In this case, we say that the perturbation $\cD^\cA$ \textit{witnesses} that $\cA$ lies in the boundary (see Fig. \ref{spearow}).
We define the boundary of $\cJ_m$ analogously, and denote it by $\partial(\cJ_m)$.
Both concepts are natural generalisations of the boundary of the set of measurements, $\partial(\cP)$.
In Ref. \cite{dariano05} it is shown that a measurement lies in $\partial(\cP)$ if and only if it has an effect with a non-trivial kernel, \ie, which is not full-rank. 
Our goal now is to find similar characterisations for $\partial(\cP^m)$ and $\partial(\cJ_m)$.

We recur to the concept of boundary to capture the idea of having flat parts in the "shape" of a convex set, corresponding to non-extremal points.
It is clear from the definition that extremal tuples, either in $\cP^m$ or $\cJ_m$, are a particular case of tuples in the boundary of the given set.
We can even use any perturbation to illustrate this fact, since for any given perturbation $\bD$ for the extremal tuple $\cA$ it follows that $\epsilon\bD$ is also a valid perturbation, for any $\epsilon\in[0,1]$.
Thus if $-\epsilon\bD$ also preserves the POVMness/joint measurability of $\cA$, then it is a symmetric perturbation with this property, contradicting the extremality of $\cA$.

\begin{example}\label{weezing}
Consider the non-extremal tuple $\cA=[(\II,0),(\II/2,\II/2)]$ of Example \ref{lickitung}.
Notice that $\cD^\cA=[(-\II/2,\II/2),(0,0)]$ is a POVM-preserving perturbation for $\cA$, but for any given $\epsilon>0$ the first element of $\cA-\epsilon\cD^\cA$ is $((1+\epsilon/2)\II, -\epsilon\II/2)$, which is not a valid POVM due to its second effect.
Therefore, $\cA\in \partial(\cP(2,2)^{\times 2})$.
Since $\cD$ also preserves the joint measurability of $\cA$, we actually see that $\cA\in\partial(\cJ_2(2,2))\cap \partial(\cP(2,2)^{\times 2})$.
\end{example}

In Example \ref{weezing} we took advantage of the fact that the first POVM $(\II,0)$ of $\cA$ is an extremal measurement.
It is simple to generalise the reasoning above to show that if any POVM of $\cA$ lies in the boundary $\partial(\cP)$, then $\cA$ belongs to the boundary $\partial(\cP^m)$.
This is another particular case of a characterisation that holds for arbitrary Cartesian products.

\begin{theorem}\label{kadabra}
Let $\cA=[\bA^{(1)},\ldots,\bA^{(m)}]\in\cP^m$ be a tuple of measurements.
Then $\cA\in\partial(\cP^m)$ if and only if $\bA^{(j)}\in\partial(\cP)$ for some $j\in\{1,\ldots,m\}$.
\end{theorem}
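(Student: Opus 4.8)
The plan is to prove the two implications separately, relying on the definition of $\partial(\cP^m)$ via witnessing perturbations together with the characterisation from Ref.~\cite{dariano05} that a POVM lies in $\partial(\cP)$ if and only if one of its effects is not full-rank. For the "if" direction I would mimic the proof of Theorem~\ref{kubone}. Suppose $\bA^{(j_0)}\in\partial(\cP)$, witnessed by a perturbation $\bD^{(j_0)}$, so that $\bA^{(j_0)}+\bD^{(j_0)}\in\cP$ while $\bA^{(j_0)}-\epsilon\bD^{(j_0)}\notin\cP$ for every $\epsilon>0$. Then the tuple $\cD^\cA:=[0,\ldots,\bD^{(j_0)},\ldots,0]$, with the nonzero block in position $j_0$, satisfies $\cA+\cD^\cA\in\cP^m$, whereas $\cA-\epsilon\cD^\cA\notin\cP^m$ for all $\epsilon>0$ because already its $j_0$-th entry fails to be a POVM; hence $\cD^\cA$ witnesses $\cA\in\partial(\cP^m)$.

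For the "only if" direction I would prove the contrapositive: if $\bA^{(j)}\notin\partial(\cP)$ for every $j$, then $\cA\notin\partial(\cP^m)$. By the cited characterisation every effect $A^{(j)}_i$ is then positive definite, so $A^{(j)}_i\geq\mu\II$ for a uniform $\mu>0$, namely the minimum of the finitely many smallest eigenvalues. Let $\cD^\cA=[\bD^{(j)}]$ be an arbitrary perturbation with $\cA+\cD^\cA\in\cP^m$; combined with $\cA\in\cP^m$ this forces $\sum_i D^{(j)}_i=0$ for every $j$, so $\sum_i\bigl(A^{(j)}_i-\epsilon D^{(j)}_i\bigr)=\II$ for all $\epsilon$. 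Choosing $\epsilon>0$ small enough that $\epsilon\norm{D^{(j)}_i}\leq\mu$ for all $j,i$ (any $\epsilon$ works if every block vanishes) gives $A^{(j)}_i-\epsilon D^{(j)}_i\geq\bigl(\mu-\epsilon\norm{D^{(j)}_i}\bigr)\II\geq 0$, that is, $\cA-\epsilon\cD^\cA\in\cP^m$. Hence no perturbation can witness $\cA\in\partial(\cP^m)$, and $\cA\notin\partial(\cP^m)$.

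The argument is routine once the $\partial(\cP)$ characterisation is granted; the only step needing a little care is the bookkeeping in the "only if" part, namely observing that $\cA+\cD^\cA\in\cP^m$ already fixes the normalisation $\sum_i D^{(j)}_i=0$ of every perturbation block, and that a single $\epsilon$ can be taken uniformly over all $m$ POVMs and all $n$ effects, which is where the finite minimum $\mu>0$ enters. A more conceptual alternative, should one wish to bypass the effect-rank characterisation, is to note that $\cP^m$ is a Cartesian product of closed convex sets, hence $\mathrm{relint}(\cP^m)=\mathrm{relint}(\cP)^m$, and that the perturbation-based notion of boundary used here agrees with the relative boundary of a closed convex set; but the hands-on route fits the style of the surrounding proofs better.
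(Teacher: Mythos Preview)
Your proof is correct. The ``if'' direction is identical to the paper's: pad a witnessing perturbation for $\bA^{(j_0)}$ with zero blocks. For the ``only if'' direction the paper takes a different route: it works directly with a perturbation $\cD$ witnessing $\cA\in\partial(\cP^m)$, observes that for each $n\in\mathbb{N}$ some component $\bA^{(j)}$ must satisfy $\bA^{(j)}-\tfrac{1}{n}\bD^{(j)}\notin\cP$, and then applies pigeonhole to find a single $j$ that works for infinitely many $n$ (hence, by convexity, for every $\epsilon>0$). Your contrapositive instead invokes the full-rank characterisation of $\partial(\cP)$ from Ref.~\cite{dariano05} and a uniform eigenvalue bound to show that no perturbation can witness the boundary property. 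The paper's argument is slightly more self-contained in that it does not appeal to the rank criterion, whereas yours makes the quantitative mechanism explicit and, as you note, generalises cleanly via $\mathrm{relint}(\cP^m)=\mathrm{relint}(\cP)^m$.
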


\begin{proof}
Suppose initially that $\cA\in\partial(\cP^m)$ and let $\cD=[\bD^{(1)},\ldots,\bD^{(m)}]$ be a perturbation such that $\cA+\cD\in\cP^m, \cA-\frac 1 n \cD\notin\cP^m$ for any $n\in\mathbb{N}$.
This means that for every $n\in\mathbb{N}$ there exists a measurement $\bA^{(j)}$ in $\cA$ such that $\bA^{(j)}+\bD^{(j)}\in\cP, \bA^{(j)}-\frac 1 n \bD^{(j)}\notin\cP$.
Since there is a finite number of POVMs in $\cA$, at least one of them satisfies the prior condition for infinitely many values of $n$, implying that for such $j$ we have $\bA^{(j)}-\epsilon\bD^{(j)}\notin\cP$, for any $\epsilon>0$.
In other words, $\bA^{(j)}\in\partial(\cP)$.

On the other hand, suppose without loss of generality that $\bA^{(1)}\in\partial(\cP)$, with a perturbation $\bD$ witnessing that.
Then $\cD=[\bD,0,\ldots,0]$ witnesses that $\cA$ lies in the boundary $\partial(\cP^m)$.
\end{proof}

We now proceed to investigate the boundary $\partial (\cJ_m)$ of the compatible set.
We are able to connect the boundary property of the compatible tuple to such property of the joint measurement.

%\begin{lemma}\label{hitmonlee}
%Let $\cA=[\bA^{(1)},\ldots,\bA^{(m)}]$ be a tuple of measurements.
%If any measurement $\bA^{(j)}$ has a null effect, then $\cA$ lies in the boundary $\partial(\cP^m)$.
%\end{lemma}
%\begin{proof}
%Let $\bA = \bA^{(j)}$ be a POVM with a null effect.
%Without loss of generality, assume $A_1=0$ and $A_2\neq 0$.
%Let $A_2 = \sum_i \lambda_i\Pi_i$ be the spectral decomposition of $A_2$, with $\lambda_1>0$.
%Then the perturbation $\bD=(\lambda_1\Pi_1, -\lambda_1\Pi_1, 0, \ldots,0)$ preserves the POVMness of $\bA$, i.e., $\bA+\bD\in\cP$.
%Moreover, $\bA-\epsilon\bD\notin\cP$ for any $\epsilon>0$, because $A_1-\epsilon D_1=-\epsilon \lambda_1 \Pi_1 <0$.
%Hence $\bA\in \partial(\cP)$ and by Proposition \ref{kadabra} we have that $\cA\in\partial(\cP^m)$.
%\end{proof}

\begin{theorem}\label{beedrill}
Let $\cA\in\cJ_m(d,n)$ be a compatible tuple.
Then $\cA$ lies in the boundary of the compatible set if and only if every joint measurement for $\cA$ lies in the boundary $\partial(\cP(d,n^m))$ of the set of $n^m$-outcome measurements.
\end{theorem}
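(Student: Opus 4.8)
The plan is to replace the statement by an equivalent one using the characterisation of $\partial(\cP)$ from Ref.~\cite{dariano05}: a POVM lies outside $\partial(\cP)$ precisely when all of its effects are full-rank (positive definite), so ``every joint measurement for $\cA$ lies in $\partial(\cP(d,n^m))$'' is the same as ``$\cA$ admits no full-rank joint measurement''. Hence it suffices to prove: $\cA\notin\partial(\cJ_m)$ if and only if $\cA$ admits a full-rank joint measurement. I would prove the two implications separately, and two auxiliary objects carry the argument: given any perturbation $\cD$ with $\cA+\cD\in\cJ_m$, a joint measurement $\bM^+$ for $\cA+\cD$ (which by \eqref{bulbasaur} has marginals $A^{(j)}_i+D^{(j)}_i$); and the maximally mixed POVM $\bN$ on $\cP(d,n^m)$, all of whose $n^m$ effects equal $\II/n^m$, whose tuple of marginals is the uniform trivial tuple $\cA_0$ (each of its $m$ POVMs having all $n$ effects equal to $\II/n$, hence $\cA_0\in\cJ_m$ with joint measurement $\bN$).

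For the ``full-rank $\Rightarrow$ not in the boundary'' direction, I assume $\cA$ has a full-rank joint measurement $\bM$ and must reverse a little any perturbation $\cD$ with $\cA+\cD\in\cJ_m$. The candidate is $\bM_\epsilon:=(1+\epsilon)\bM-\epsilon\bM^+$: its marginals compute to $(1+\epsilon)A^{(j)}_i-\epsilon(A^{(j)}_i+D^{(j)}_i)=A^{(j)}_i-\epsilon D^{(j)}_i$, so as soon as $\bM_\epsilon$ is a genuine POVM it is a joint measurement for $\cA-\epsilon\cD$ (and then $\cA-\epsilon\cD$ is automatically a valid compatible tuple). Positivity is the only point: full-rankness of $\bM$ gives a constant $c>0$ with $M_{a_1\ldots a_m}\ge c\II$ for every index string, while every effect of $\bM^+$ satisfies $M^+_{a_1\ldots a_m}\le\II$, so $(1+\epsilon)M_{a_1\ldots a_m}-\epsilon M^+_{a_1\ldots a_m}\ge\bigl(c-\epsilon(1-c)\bigr)\II\ge 0$ for all sufficiently small $\epsilon>0$. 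Thus $\cA-\epsilon\cD\in\cJ_m$ for small $\epsilon$, i.e. $\cA\notin\partial(\cJ_m)$.

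For the ``not in the boundary $\Rightarrow$ full-rank'' direction, I assume $\cA\notin\partial(\cJ_m)$ and feed the definition of the boundary with the particular perturbation $\cD:=\cA_0-\cA$, which satisfies $\cA+\cD=\cA_0\in\cJ_m$; not lying in $\partial(\cJ_m)$ then forces $\cA-\epsilon\cD=(1+\epsilon)\cA-\epsilon\cA_0\in\cJ_m$ for some $\epsilon>0$. Writing $\cA'$ for this tuple, $\bM'$ for a joint measurement of $\cA'$, and $\lambda:=1/(1+\epsilon)\in(0,1)$, we get $\cA=\lambda\cA'+(1-\lambda)\cA_0$, so $\bM:=\lambda\bM'+(1-\lambda)\bN$ is a joint measurement for $\cA$ whose effects all satisfy $M_{a_1\ldots a_m}\ge(1-\lambda)\,\II/n^m>0$; hence $\bM$ is full-rank.

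I expect this last direction to be the crux: the genuinely nontrivial idea is to perturb $\cA$ towards the maximally mixed tuple $\cA_0$, because reversibility of that specific direction is exactly what lets one convex-combine a joint measurement of $\cA$ with the manifestly full-rank $\bN$ to produce a full-rank joint measurement. Everything else is routine: linearity of the marginal map (so affine combinations of joint measurements are joint measurements of the corresponding affine combinations of tuples), the uniform lower bound $c\II$ available for any full-rank POVM, and the universal bound $M^+\le\II$ for POVM effects.
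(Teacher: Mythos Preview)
Your proof is correct and follows essentially the same architecture as the paper's: both directions are argued via the contrapositive, and the forward implication uses the same construction (take a joint measurement $\bM^+$ of $\cA+\cD$ and consider the affine combination $(1+\epsilon)\bM-\epsilon\bM^+$, or equivalently the perturbation $\bS=\bM^+-\bM$). Your explicit positivity bound $(1+\epsilon)M-\epsilon M^+\ge (c-\epsilon(1-c))\II$ is a clean way to see why full-rankness of $\bM$ lets one reverse any perturbation.

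There is one genuine simplification in your second direction that is worth highlighting. The paper perturbs $\cA$ along the \emph{depolarising} direction, writing $\cA=\frac{1}{1+\epsilon}\cA'+\frac{\epsilon}{1+\epsilon}\cA^{\tr}$ with $\cA^{\tr}=[\bA^{(j),\tr}]$, and then needs an extra step: it must invoke Theorem~\ref{kadabra} to argue that $\cA\notin\partial(\cP^m)$, hence every effect $A^{(j)}_i$ is full-rank, hence every $\tr(A^{(j)}_i)>0$, so that the product joint measurement $\bM^{\tr}$ of Eq.~\eqref{onix} has only full-rank effects. You bypass this entirely by perturbing instead toward the \emph{uniform} trivial tuple $\cA_0$ (all effects $\II/n$), whose joint measurement $\bN=(\II/n^m,\ldots,\II/n^m)$ is manifestly full-rank regardless of $\cA$. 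This removes the dependence on Theorem~\ref{kadabra} and makes the argument self-contained. The paper's choice buys a closer tie to the depolarising map used elsewhere in the article, but for the purposes of this theorem your route is strictly shorter.
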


\begin{proof}
Here we denote $\cJ_m(d,n)$ by $\cJ_m$ as usual, but $\cP$ will now refer to $\cP(d,n^m)$, the relevant set of measurements in this situation.
Suppose $\cA\in\partial(\cJ_m)$ and let $\bM$ be a joint measurement for $\cA$.
Then there exists a perturbation $\cD$ such that $\cA+\cD\in\cJ_m, \cA-\epsilon\cD\notin\cJ_m$ for all $\epsilon >0$.

We will now construct a perturbation $\bD$ witnessing that $\bM$ lies in the boundary $\partial(\cP)$ based on $\cD$.
Consider the joint measurement $\bM'$ for the tuple $\cA+\cD$ and define $\bS:=\bM'-\bM$.
%$\bS$ is a so-called \textit{stepmother} for the perturbation $\cD$, since
We see that $\bS$ preserves the marginals of $\bD$, since the elements of $\cD$ are recovered by its marginals,
\beq
\sum_{r\neq j}\sum_{a_r}S_{a_1\ldots (a_j=i)\ldots a_m} = D^{(j)}_i,
\eeq
for $j=1,\ldots, m,i=1,\ldots, n,$ although we have no preservation of the positive semidefinitiveness condition (nor could have, since possibly none of the operators $D^{(j)}_i$ are positive semidefinite).
Notice now that $\bM + \bS = \bM'$ is a valid POVM, which implies that $\bS$ is a POVM-preserving perturbation for $\bM$.
It is straightforward to check that for any $\epsilon>0$ the marginals of $\bM-\epsilon\bS$ yield the elements of $\cA-\epsilon\cD$, but since we know that $\cA-\epsilon\cD\notin\cJ_m$ we conclude that $M_{a_1\ldots a_m}-\epsilon S_{a_1\ldots a_m}\ngeq 0$ for some $a_1,\ldots,a_m$.
Hence $\bM-\epsilon\bS$ is not a valid measurement for any $\epsilon>0$, and $\bM\in\partial(\cP)$.

Assume now that $\cA\notin\partial(\cJ_m)$.
We will show that there exists a joint measurement $\bM$ for $\cA$ such that every effect of it is full-rank, which implies that it does not belong to boundary of $\cP$ \cite{dariano05}.

If $\cA$ is jointly measurable but does not belong to the boundary of $\cJ_m$, then for every JM-preserving perturbation $\cD$ there exists an $\epsilon >0$ such that $\cA':=\cA-\epsilon\cD$ is still a jointly measurable tuple.
Taking $\cD$ to be formed by depolarising perturbations (see Eq. (\ref{magikarp})) of the POVMs in $\bA$, we see that $\cA' = \Phi_{1+\epsilon}(\cA)$, and hence $\cA = \Phi_{\frac{1}{1+\epsilon}}(\cA')$.
Similarly to Eq. (\ref{pidgey}), we can write
\beq\label{alakazam}
\cA = \frac{1}{1+\epsilon}\cA' + \frac{\epsilon}{1+\epsilon}\cA^\tr,
\eeq
where $\cA^\tr=[\bA^{(1),\tr},\ldots,\bA^{(m),\tr}]$ is a tuple of trivial POVMs given by effects $A^{(j), \tr}_i = \tr(A^{(j)}_i)\II/d$.
Since $\cA\in\cJ_m,\cA\notin\partial(\cJ_m)$, and $\cJ_m\subset\cP^m$, we see that $\cA\notin\partial(\cP^m)$.
By Proposition \ref{kadabra}, all the effects of all POVMs in $\cA$ are full-rank (and therefore non-null). 
Hence the same holds for $\cA^\tr$, and the standard joint measurement for trivial POVMs $\bM^\tr$ defined in Eq. (\ref{onix}) have only non-null (and therefore full-rank) effects.

Eq. (\ref{alakazam}) implies that the combination
\beq
\bM = \frac{1}{1+\epsilon}\bM' + \frac{\epsilon}{1+\epsilon}\bM^{\tr}
\eeq
of the joint measurement $\bM'$ for $\cA'$ and a joint measurement $\bM^\tr$ for $\cA^\tr$ is a joint measurement for $\cA$.
We conclude the proof by seeing that each effect of $\bM$ is a sum of a full-rank and a positive semidefinite operators, which therefore is full-rank.
Thus, $\bM\notin\partial(\cP(d,n^m))$.
\end{proof}

Theorem \ref{beedrill} says that if $\cA\in\partial(\cJ_m(d,n))$ then the set of joint measurements for $\cA$ is contained in the boundary $\partial(\cP(d,n^m))$ (Figure \ref{fearow}).
In particular, since this set is convex, we conclude that it contains a unique element or all its elements belong to the same facet of the boundary; otherwise, some joint measurement for $\cA$ would lie in the interior of $\cP(d,n^m)$.

\begin{figure}[h!]
\begin{center}
\includegraphics[width=1\linewidth]{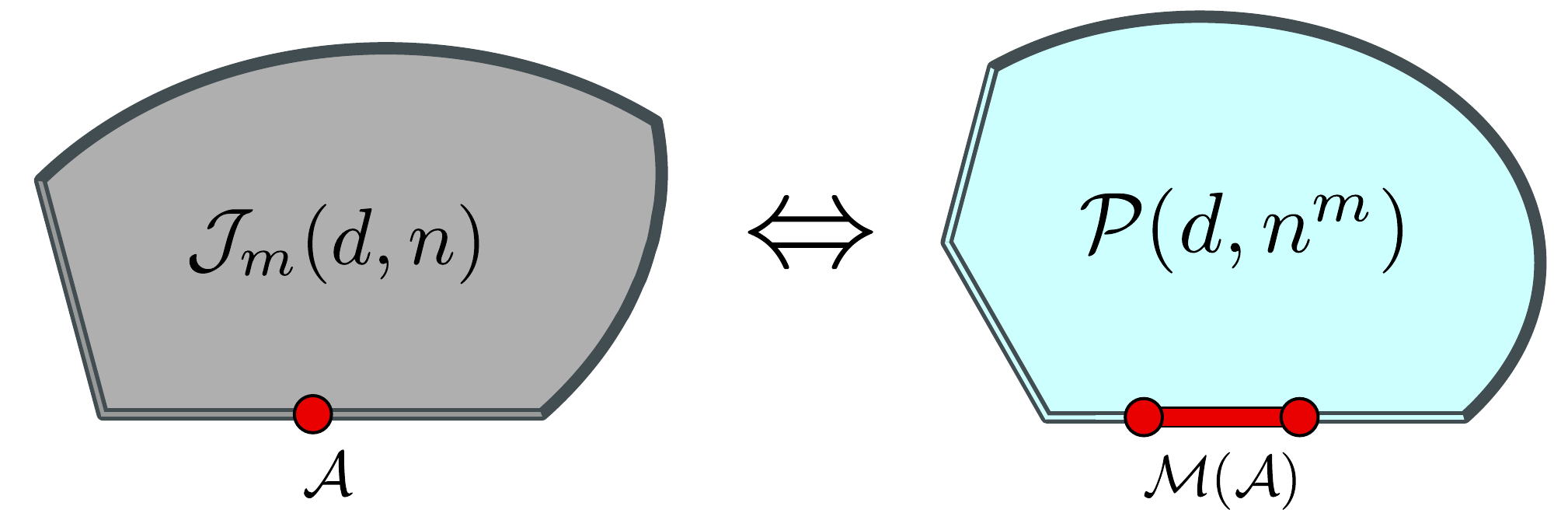}
	\caption{\small{\label{fearow}A pictorial description of Theorem \ref{beedrill}: a tuple of measurements $\cA$ lies in the boundary $\partial(\cJ_m(d,n))$ if and only if the (possibly unit) set $\mathcal{M}(\cA)$ of joint measurements for $\cA$, represented in red, is contained in the boundary $\partial(\cP(d,n^m))$.
	}}
\end{center}
\end{figure}

Our next result establishes that the boundary property is a necessary condition for tuples to have a unique joint measurement.

\begin{theorem}\label{dragonite}
Let $\cA\in\cJ_m(d,n)$ be a compatible tuple of POVMs admitting a unique joint measurement.
Then $\cA$ lies in the boundary of the compatible set $\cJ_m(d,n)$.
\end{theorem}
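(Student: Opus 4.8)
The strategy is to prove the contrapositive: assuming $\cA \in \cJ_m(d,n)$ does \emph{not} lie in the boundary $\partial(\cJ_m(d,n))$, I will exhibit two distinct joint measurements for $\cA$, contradicting uniqueness. The key tool is Theorem~\ref{beedrill} together with the characterisation of $\partial(\cP)$ from Ref.~\cite{dariano05}. If $\cA \notin \partial(\cJ_m)$, then by the ``only if'' direction of Theorem~\ref{beedrill} (contrapositive), \emph{not every} joint measurement for $\cA$ lies in $\partial(\cP(d,n^m))$; in fact the proof of Theorem~\ref{beedrill} constructs an explicit joint measurement $\bM$ all of whose effects are full-rank, hence $\bM$ lies in the \emph{interior} of $\cP(d,n^m)$.

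Now I would exploit interiority. Since every effect $M_{a_1\ldots a_m}$ of $\bM$ is full-rank (strictly positive definite), there is room to perturb: I would pick any non-null Hermitian tuple $\bD = (D_{a_1\ldots a_m})$ whose marginals all vanish — i.e. satisfying Eq.~(\ref{geodude}b), $\sum_{s\neq j}\sum_{a_s} D_{a_1\ldots(a_j=i)\ldots a_m} = 0$ for all $j,i$ — and then rescale it by a sufficiently small $\lambda > 0$ so that $M_{a_1\ldots a_m} + \lambda D_{a_1\ldots a_m} \geq 0$ for all index strings, which is possible precisely because each $M_{a_1\ldots a_m}$ is bounded away from the boundary of the positive cone. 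The resulting $\lambda\bD$ is then a non-null marginal-preserving perturbation for $\bM$, so by Theorem~\ref{snorlax} the joint measurement for $\cA$ is not unique. The one point needing care here is the existence of a non-null $\bD$ with all marginals zero: for $n^m > $ (number of marginal constraints), or more simply whenever the joint measurement has more than $d^2$ effects, such $\bD$ exists by dimension counting; and in the degenerate small cases one checks directly — in particular, if no such $\bD$ exists the joint measurement is forced to be unique by Theorem~\ref{snorlax}, so $\cA$ would admit a unique joint measurement while having all-full-rank joint measurement, i.e. $\bM$ in the interior of $\cP$, which by Theorem~\ref{beedrill} means $\cA\notin\partial(\cJ_m)$ — but then the statement we want to prove has vacuous hypothesis issues, so one should instead argue that interiority of $\bM$ plus the convexity of the joint-measurement set is enough.

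The cleanest route avoiding the dimension-counting subtlety: if the joint measurement is unique and equals $\bM$ with all effects full-rank, consider the depolarising perturbation $\cD^{\tr}$ of $\cA$ as in Eq.~(\ref{magikarp}). I would argue that a small ``anti-depolarisation'' $\cA - \epsilon\,\cD^{\tr}$ remains in $\cJ_m$: its candidate joint measurement is $\bM$ shifted by the corresponding small Hermitian tuple built from the effects of $\bM^{\tr}$, and since $\bM$ is full-rank this shifted operator tuple stays positive for $\epsilon$ small while its marginals reproduce $\cA - \epsilon\cD^{\tr}$. Combined with $\cA + \cD^{\tr} = \Phi_0(\cdot)$-type mixture already known to be compatible, this shows $\cD^{\tr}$ witnesses nothing forcing $\cA$ to the boundary — and iterating, that $\cA$ is interior. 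Hence if $\cA$ had a unique joint measurement it would still be forced to $\partial(\cJ_m)$ by Theorem~\ref{beedrill} whenever that unique $\bM$ is non-interior; the remaining case (unique $\bM$, interior) I would rule out by directly producing the marginal-preserving perturbation as above. The main obstacle is precisely assembling these cases without circularity — making sure that ``unique joint measurement'' genuinely forces, via Theorem~\ref{beedrill} and \cite{dariano05}, that the joint measurement cannot be interior, which is where the real content lies.
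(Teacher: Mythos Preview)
Your overall strategy---prove the contrapositive by producing, from the interiority of $\cA$, a full-rank joint measurement $\bM$ and then a non-null marginal-preserving perturbation for it---is sound and close in spirit to the paper's argument. But the proposal has a genuine gap at the decisive step: you never establish that a non-null Hermitian tuple $\bD=(D_{a_1\ldots a_m})$ with \emph{all marginals equal to zero} exists. Your two attempts to justify this both fail. The remark that such $\bD$ exists ``whenever the joint measurement has more than $d^2$ effects'' conflates two different perturbation problems: more than $d^2$ non-null effects guarantees a symmetric \emph{POVM-preserving} perturbation (non-extremality in $\cP(d,n^m)$), but says nothing about whether that perturbation has vanishing marginals. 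And the ``degenerate small cases'' discussion is circular: you note that if no such $\bD$ exists then $\bM$ is unique, then try to extract a contradiction from Theorem~\ref{beedrill}---but Theorem~\ref{beedrill} only returns you to $\cA\notin\partial(\cJ_m)$, which is your hypothesis, not a contradiction. The ``cleanest route'' paragraph likewise never produces a second joint measurement for $\cA$; it drifts toward showing $\cA-\epsilon\cD^{\tr}\in\cJ_m$, which again is just interiority.

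The missing ingredient is easy once identified: for $m\geq 2$ and $n\geq 2$ a non-zero marginal-zero $\bD$ always exists by explicit construction (e.g.\ pick any non-zero Hermitian $H$ and set $D_{a_1\ldots a_m}=(-1)^{a_1+a_2}H$ when $a_1,a_2\in\{1,2\}$ and $a_3=\cdots=a_m=1$, and zero otherwise; all marginals vanish). This is exactly the content of Proposition~\ref{electabuzz}, and the paper's proof of Theorem~\ref{dragonite} routes through that proposition directly rather than through Theorem~\ref{beedrill}: it writes $\cA=\tfrac{1}{1+\epsilon}\cA'+\tfrac{\epsilon}{1+\epsilon}\cA^{\tr}$ with $\cA'\in\cJ_m$ (using interiority) and $\cA^{\tr}$ a tuple of trivial non-deterministic POVMs (using Theorem~\ref{kadabra}), then invokes Proposition~\ref{electabuzz} to get infinitely many joint measurements for $\cA^{\tr}$, each combining with a fixed joint measurement of $\cA'$ to give a distinct joint measurement of $\cA$. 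Your route via Theorem~\ref{beedrill} would reach the same conclusion once the existence of the marginal-zero $\bD$ is secured, but the detour through full-rankness is unnecessary---the real work is precisely Proposition~\ref{electabuzz}, which your proposal is missing.
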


\begin{proof}
Once more the proof is by contraposition.
Lets assume $\cA\notin\partial(\cJ_m)$.
According to Proposition \ref{kadabra}, this means that $\cA$ does not contain any extremal measurement, and in particular, any deterministic POVM.

Analogously to the previous proof, we can use the fact that $\cA$ is not in the boundary to write it as
\beq
\cA = \frac{1}{1+\epsilon}\cA' + \frac{\epsilon}{1+\epsilon}\cA^\tr,
\eeq
where $\cA^\tr=[\bA^{(1),\tr},\bA^{(m),\tr}]$ is a tuple of trivial POVMs and $\cA'=\Phi_{1+\epsilon}(\cA)$ is another jointly measurable tuple of measurements.
Since there is no deterministic measurement in $\cA$, the same holds for $\cA^\tr$.
Thus $\cA^\tr$ is a tuple of trivial, non-deterministic POVMs, and according to Proposition \ref{electabuzz} there are infinitely many joint measurements $\bM^\tr$.

Thus we have that each combination
\beq
\bM = \frac{1}{1+\epsilon}\bM' + \frac{\epsilon}{1+\epsilon}\bM^{\tr}
\eeq
of a joint measurement $\bM'$ for $\cA'$ and a joint measurement $\bM^\tr$ for $\cA^\tr$ is a joint measurement for $\cA$.
Since each measurement $\bM^\tr$ yields a different $\bM$, we conclude that there is an infinite number of joint measurements for $\cA$.
\end{proof}

Theorem \ref{dragonite} takes advantage of the fact that a tuple in the interior of the compatible set is a noisier version of some other compatible tuple, and this noise allows for a plurality of joint measurements.
The next natural question is whether this condition is also sufficient, what we answer in the negative with the next example.

\begin{example}\label{charizard}
Consider the tuple $\cA^{(xyz)}$ formed by the three dichotomic measurements $\bA^{(x)},\bA^{(y)},\bA^{(z)}$ given by
\begin{equation}
A^{(w)}_{a} = \frac{\II+a\sigma_w}2 ,
\end{equation}
with $a=\pm1, w=x,y,z$, associated to the Pauli observables.
The tuple $\Phi_t(\cA^{(xyz)})$ is jointly measurable for any $t<t^*=1/\sqrt 3$ \cite{heinosaari10}.
At visibility $t^*$, an 8-outcome joint measurement $\bM$ is given by the effects 
\beq
M_{abc} = \frac{1}{8}\left(\II + \frac{a\sigma_x+b\sigma_y+c\sigma_z}{\sqrt 3}\right),
\eeq
$a,b,c=\pm1$.
Since $t^*$ is the critical visibility for $\cA^{(xyz)}$, the tuple $\Phi_{t^*}(\cA^{(xyz)})$ lies in $\partial(\cJ_3(2,2))$, the boundary of the corresponding compatible set.
Indeed, if that was not the case then we would be able use the depolarising perturbation $\cD^\tr$ to find a less-depolarised version of $\cA^{(xyz)}$ still jointly measurable.

However, as already pointed in Ref. \cite{heinosaari10}, this tuple admits more than one joint measurement.
The 8-outcome POVMs $\bM^{(+)},\bM^{(-)}$ given by
\begin{eqnarray}
M^\pm_{abc} = \frac 1 4 \left(\II\pm\frac{a\sigma_x + b\sigma_y + c\sigma_z}{\sqrt 3}\right)
\end{eqnarray}
if $\pm(a,b,c)\in \{(1,1,1),\ (-1,-1,1),\ (-1,1,-1),$ $\ (1,-1,-1)\}$ and $M^\pm_{abc} =0$ otherwise are also joint measurements for $\Phi_{t^*}(\cA^{(xyz)})$, obtained by adding the marginal-preserving perturbation $\bD^\pm$ given by $D^\pm_{abc} = \pm abcM_{abc}$.
(In fact, $\bM^{(+)}, \bM^{(-)}$ are symmetric informationally complete (SIC) tetrahedral measurements embedded in $\cP(2,8)$.)
These POVMs are extremal in the set of joint measurements for $\cA^{(xyz)}$.
We apply a straightforward modification to SDP (\ref{charmander}) to check that there is no marginal-preserving perturbation for $\bM$ orthogonal to $\bD$, hence concluding that the set of joint measurements for $\cA^{(xyz)}$ is completely described by convex combinations of $\bM^{(+)}, \bM^{(-)}$.

Therefore, $\Phi_{t^*}(\cA^{(xyz)})$ has many joint measurements, even though it lies in the boundary of the compatible set.
On the other hand, Theorem \ref{gyarados} tells us that $\cA^{(xyz)}$ is not an extremal tuple in $\cJ_3(2,2)$.
Indeed, we find the decomposition 
\begin{subequations}
\beq
\cA^{(xyz)} = \frac{1} 2 \left(\cB+\cC\right),
\eeq
where $\cB=[\bB^{(1)},\bB^{(2)},\bB^{(3)}]$ is given by
\begin{align}\nonumber
\bB^{(1)} &= \Phi_{\sqrt\frac{2}{3}}\left(\frac{\II+(\sigma_x+\sigma_z)/\sqrt 2}{2}, \frac{\II-(\sigma_x+\sigma_z)/\sqrt 2}{2}\right)\\
\bB^{(2)} &= \Phi_{\frac{1}{\sqrt 3}}(\bA^{(y)})\\ \nonumber
{\bB}^{(3)} &= {\bB}^{(1)}
\end{align}
and $\cC=[\bC^{(1)},\bC^{(2)},\bC^{(3)}]$ is given by
\begin{align}\nonumber
\bC^{(1)} &= \Phi_{\sqrt\frac{2}{3}}\left(\frac{\II+(\sigma_x-\sigma_z)/\sqrt 2}{2}, \frac{\II-(\sigma_x-\sigma_z)/\sqrt 2}{2}\right)\\
\bC^{(2)} &= \Phi_{\frac{1}{\sqrt 3}}(\bA^{(y)})\\ \nonumber
\bC^{(3)} &= \Phi_{\sqrt\frac{2}{3}}\left(\frac{\II-(\sigma_x-\sigma_z)/\sqrt{2}}{2}, \frac{\II+(\sigma_x-\sigma_z)/\sqrt{2}}{2}\right).
\end{align}
\end{subequations}
It is simple to check that $\cB$ and $\cC$ are jointly measurable and extremal in $\cJ_3(2,2)$.
\end{example}

\begin{table*}[t!]
\scriptsize{
\begin{tabular}{|ccccccccccccc|}
\hline
Measurement &&&& Tuple of measurements &&&& Joint measurement &&& Result&\\
\hline
all extremal in $\cP(d,n)$ && $\iff$ && extremal in $\cP(d,n)^{m}$ &&&&&&& Theorem \ref{kubone}&\\

at least one is boundary in $\cP(d,n)$ &&  $\iff$ && boundary in $\cP(d,n)^{m}$ &&&&&&& Theorem \ref{kadabra}&\\

\textcolor{blue}{at least one is extremal in $\cP(d,n)$} && \textcolor{blue}{$\wedge$} && \textcolor{blue}{jointly measurable} && \textcolor{blue}{$\implies$} && \textcolor{blue}{unique}&&& \textcolor{blue}{Corollary \ref{exeggcutor}} &\\

&&&&extremal in $\cJ_m(d,n)$ && $\iff$ && unique, extremal in $\cP(d,n^m)$&&& Corollary \ref{exeggcute}&\\

&&&& extremal in $\cJ_m(d,n)$ && $\centernot \impliedby$ && unique&&& Example \ref{lickitung}&\\

&&&& boundary in $\cJ_m(d,n)$ &&$\iff$&& boundary in $\cP(d,n^m)$&&& Theorem \ref{beedrill}&\\

&&&& boundary in $\cJ_m(d,n)$ &&$\impliedby$&& unique&&& Theorem \ref{dragonite}&\\
&&&& boundary in $\cJ_m(d,n)$ &&$\centernot\implies$&& unique&&& \textcolor{blue}{Example} \ref{charizard}&\\
\hline
\end{tabular}}
\caption{Summary of relations between properties of measurements, tuples of measurements, and joint measurements (in the case where the tuples are jointly measurable). In blue, the previously known results in the literature of quantum measurements~\cite{haapasalo14, heinosaari10}.}
\end{table*}

The above decomposition emphasizes the relevance of tuples in which some measurements are more depolarised than others, such as $\cB$ and $\cC$.

\section{Discussion and conclusion}

We introduced the notion of property-preserving perturbations for tuples of measurements and for joint measurements, and derived a criterion for deciding on the uniqueness of a joint measurement based on it.
By interpreting the depolarising map as the action of a perturbation, we proved that extremality in the set $\cJ_m$ of compatible tuples is a sufficient condition for joint measurement uniqueness, while belonging to the boundary of $\cJ_m$ is a necessary one.
We also provided counter-examples showing that none of these conditions are both necessary and sufficient.

As a by-product, we were led to characterise the extremal and boundary points of the sets of general tuples of measurements $\cP^m$ and of compatible tuples $\cJ_m$, extending well-known results to these so-far unexplored grounds.
With our machinery we also recover previous results in the context of operator algebras, regarding the extremality of the measurements in the tuple~\cite{haapasalo14,haapasalo15}.
(See Table I for a summary of our results.)
It would be interesting to see if the other results we derived here can also be obtained by the operator algebras approach.

In the case where the joint measurement is not unique, a natural question would be to find criteria for deciding on the optimal one.
Each such criterion would provide an unambiguous way of relating each compatible tuple to a unique joint measurement. 

Another raised problem is to relate our questions here to other notions of compatibility, such as coexistence~\cite{lahti03}, as well as to the topic of Einstein-Podolsky-Rosen steering~\cite{wiseman07}, which is closely connected to joint measurability~\cite{quintino14, uola14}.
Using this connection, each joint measurement for a tuple of POVMs $[\bB]$ corresponds to a local-hidden-state (LHS) model for the assemblage $\{\sigma_{i|\bB}=\tr_A(\Psi B^{(i)}\otimes \II)\}$, where $\Psi$ is a fixed full-Schmidt-rank bipartite state.
Then each of our results translates to a relation between such extremal/boundary assemblages and their (possibly unique) LHS models.
A natural next step would be to consider unsteerable states, that is, states for which every assemblage generated by them can be described in terms of an LHS model.
In this scenario, one can ask what is the relation between the boundary/extremal properties in the set of unsteerable states and the uniqueness of the LHS model (for all measurements).
Since not even for the simplest case of Werner states the boundary of such set is completely understood, the techniques presented here might prove to be useful in this context. 

\section*{ACKNOWLEDGEMENTS}

The authors are grateful to Marco T\'ulio Quintino for introducing them to the uniqueness problem, to Jessica Bavaresco and Teiko Heinosaari for fruitful discussions and several improvements for earlier versions of this manuscript, and to Alessandro Toigo and Claudio Carmeli for pointing a flaw in a previous proof of Theorem 3, that restricted it to pairs of POVMs, and proofreading the new proof.
This work was supported by Spanish MINECO (QIBEQI FIS2016-80773-P and Severo Ochoa SEV-2015-0522), Fundaci\`o Cellex, Generalitat de Catalunya (SGR875 and CERCA Program), ERC CoG QITBOX, the Brazilian National Institute for Science and Technology of Quantum Information, as well as the Brazilian agencies CNPq and CAPES.

\bibliographystyle{mod_unsrt}
%\bibliography{biblio.bib}

\end{document}